
\documentclass[twocolumn,twoside]{IEEEtran}

\pagestyle{empty}
 
%
%

\usepackage{graphicx}
\usepackage{wrapfig}
\usepackage{epsfig}
\usepackage{graphicx}
\usepackage{color}
\usepackage{mdframed}
\usepackage{mathrsfs} 
\usepackage{amsfonts}
\usepackage{latexsym}
\usepackage{amsmath}
\usepackage{amssymb}
\usepackage{color}
\usepackage{float} 
\usepackage{floatflt} 
\usepackage{caption}
\usepackage{subcaption}
\usepackage{balance}
\usepackage{cite}
\usepackage{pdfpages}

\newcommand{\suppress}[1]{}

\newtheorem{theorem}{Theorem}[section]
\newtheorem{lemma}{Lemma}[section]

\newtheorem{claim}{Claim}[section]

\newtheorem{definition}{Definition}[section]
\newtheorem{corollary}{Corollary}[section]
\newtheorem{remark}{Remark}[section]


\newcommand{\bRk}{{\bf R}}
\newcommand{\bRkk}{{\bf R}}

\newcommand{\bRs}{{\bf R}_{\tt SR}}

\newcommand{\rs}{{\tt SR}}
\newcommand{\rk}{ }

\def\proof{\par\penalty-1000\vskip .5 pt\noindent{\bf Proof\/: }}

\def\proof{\noindent\hspace{2em}{\it Proof: }}

\def\cA{\mbox{$\cal{A}$}}

\def\cX{\mbox{$\cal{X}$}}




\def\cB{\mbox{$\cal{B}$}}

\def\cK{\mbox{$\cal{K}$}}

\newcommand{\cI}{{{\cal I}}}

\newcommand{\bR}{{{\bf R}}}

\def\01{\{0,1\}}

\newcommand{\remove}[1]{}

\begin{document}

\title{Multiple Key-cast over Networks}

\author{Michael Langberg\ \ \ \ \ \ \ \ \  \ \ \ \ \ \ \ \ \ 
Michelle Effros
\thanks{M. Langberg is with the Department of Electrical Engineering at the University at Buffalo (State University of New York).  
Email: \texttt{mikel@buffalo.edu}}
\thanks{M. Effros is with the Department of Electrical Engineering at the California Institute of Technology.
Email: \texttt{effros@caltech.edu}}
\thanks{This work is supported in part by NSF grants CCF-1817241 and CCF-1909451. 
}
}


\maketitle

\begin{abstract}
The multicast key-dissemination problem over noiseless networks, introduced by Langberg and Effros [ITW 2022], here called the {\em key-cast} problem, captures the task of disseminating a shared secret random key to a set of terminals over a given network. 
Unlike traditional communication, where messages must be delivered from source to destination(s) unchanged, key-cast is more flexible since key-cast need not require source reconstruction at destination nodes.
For example, the distributed keys can be  mixtures of sources from which the sources themselves may be unrecoverable.

The work at hand considers key dissemination in the single-source, multiple-multicast network coding setting, i.e.,  the {\em multiple key-cast} problem.
Here, distinct keys are to be simultaneously transmitted from a single source node to multiple terminal sets, one shared random key per multicast set.
Scenarios include the secure setting, in which only the source and intended destinations gain information about a given key; and the non-secure setting in which the only requirement is that the knowledge of one key does not reveal information about another.
In both settings, we present combinatorial conditions for key dissemination and design corresponding multiple key-cast schemes.
In addition, we compare the multiple key-cast rate with and without the restriction of source reconstruction, the former corresponding to  traditional forms of communication; key-cast achieves a strict advantage in rate when source reconstruction is relaxed.
\end{abstract}

\thispagestyle{empty}

\section{Introduction} 
\label{sec:intro}
The resource of shared randomness plays a fundamental role in the theory and practice of network communication systems.
A uniformly distributed key, shared among some network users and, at times, hidden from others, appears as a central resource in a variety of communication tasks and is used, for example, in the secure transmission of information; in randomized coding techniques in the presence of uncertain noise models; in the context of distributed computing, statistical inference, and distributed learning, though the availability of public coins; and in distributed authentication, identification, and local differential privacy through shared forms of sampling and hashing, e.g., \cite{shannon1949communication,maurer1993secret,ahlswede1993common,lapidoth1998reliable,chen2021breaking,acharya2019communication}.
%
%

A {\em key-dissemination} communication protocol is one in which a key $K$, or a collection of keys $\cK=\{K_1,\dots,K_\ell\}$, which may be required to be kept secret, are shared among a collection of users as a prelude to future communication tasks requiring shared randomness.
The task of key dissemination (also called secret key-agreement) has seen significant studies over the past decades in the context of isolated network structures, e.g., \cite{gacs1973common,wyner1975common,wyner1975wire,witsenhausen1975sequences,csiszar1978broadcast,bennett1988privacy,ahlswede1993common,maurer1993secret,bennett1995generalized,maurer1997privacy,ahlswede1998common,csiszar2000common,csiszar2004secrecy,mossel2006non,chan2014multiterminal,csiszar2008secrecy,gohari2010information,gohari2010information2,siavoshani2010group,bogdanov2011extracting,tyagi2013common,chan2014extracting,liu2015secret,guruswami2016tight,xu2016private,hayashi2016secret,narayan2016multiterminal,ghazi2018resource,liu2016common,canonne2017communication} 
in which a collection of users wish to share a common key over a noisy network structure that is subject to eavesdropping. 

The problem of key-dissemination in the context of network coding (i.e., noiseless networks) was recently introduced in \cite{LE:22}.
Specifically,  \cite{LE:22} studies the multiple-source, single-multicast setting, here called the {\em key-cast} setting,  in which one wishes to multicast a uniform key $K$ of rate $R$ to a collection of terminal nodes. 
Sources have access to independent randomness, and, as the network is noiseless, the resulting key $K$ is a function of the sources' information.
Key dissemination in this context resembles the task of secure multicast network coding,  e.g., 
\cite{cai2002secure,feldman2004capacity,cai2007security,yeung2008optimality,el2012secure,silva2011universal,jaggi2012secure},  as information eventually shared between terminals is kept secret from the network eavesdropper.
However the two tasks differ in that the former is 
more flexible.
Specifically, in the latter, source nodes hold message information that must be reconstructed at terminal nodes while  
in the former source reconstruction is not required since keys can be mixtures of sources from which the sources themselves may be unrecoverable.
This flexibility in key-dissemination opens the possibility of a key-rate $R$ that exceeds that obtainable through secure-multicast using source reconstruction.
Indeed, for the key-cast setting, \cite{LE:22} shows a significant gap between the key rates achievable with and without the requirement of source reconstruction.

The work at hand continues this line of study, and addresses key dissemination in the single-source, multiple-multicast, network-coding setting, where there are multiple terminal sets, each requiring a distinct key.
We refer to this problem as the (single-source) {\em multiple key-cast} problem.
The simultaneous dissemination of distinct keys to distinct terminal sets over networks is useful
as a prelude to future communication tasks within each terminal set and for multiparty applications that require unique identification, authentication, and private communications obtained through key dissimenation, e.g.,
\cite{lim2005extracting,su2008digital,suh2007physical,yu2009towards,chen2020breaking,byrd2020differentially,ahlswede2021identification}.
We study multiple key-cast in both the secure and non-secure setting.

In the secure setting, we seek the dissemination of distinct keys to distinct terminal sets under the requirement that no individual network node other than the source $s$ and each intended terminal node learns any information regarding each of the keys. 
All concepts are described in full detail in Section~\ref{sec:model}.
Using the paradigm of secret sharing, e.g., \cite{shamir1979share,ito1989secret,beimel2011secret}, and inspired by the study of distributed secret sharing in the context of single-source network coding \cite{shah2013secure,shah2015distributed}, we present a tight sufficient combinatorial condition (and a corresponding communication scheme) for secure multiple key-cast.

In the non-secure case, we require both that each terminal set decode a distinct key and that the knowledge of one key does not reveal information on any other key disseminated through the network.
Inspired by the analyses appearing in \cite{wang2007intersession,wang2010pairwise,shenvi2010simple}, which  address coding solutions and upper bounds for $2$-unicast network coding, we present tight necessary and sufficient combinatorial conditions (and a corresponding communication scheme) for multiple key-cast.

Finally, to better understand the place of source reconstruction in the context of key dissemination protocols, we compare key-dissemination with the more traditional form of communication in which source information is first reconstructed at the terminals and only then (perhaps) post-processed to derive a shared key.
We show, for both the secure and non-secure case, a significant gap between the  key-rates obtainable with and without the requirement of source reconstruction.

The remainder of the work is structured as follows. 
A detailed model is given in Section~\ref{sec:model}.
Our main results are given in Section~\ref{sec:results}, first for the non-secure case and then for the secure case.
In Section~\ref{sec:compare} we present both secure and non-secure instances, corresponding to the analysis of Section~\ref{sec:results}, for which there are significant differences between the key-rates obtainable with and without the requirement of source reconstruction.
Several technical proofs are deferred to the Appendix.


\section{Model} 
\label{sec:model}


We follow the model and definitions given in \cite{LE:22}, with slight modifications to fit the problems studied in this work.
The following notation is useful to the definitions that follow. For any integer $\ell$ let $[\ell]=\{1,2,\dots,\ell\}$.
\vspace{2mm}

\noindent
{\bf $\bullet$ Acyclic Multiple Key-cast Instance:} An instance ${\mathcal I}=(G,s,\{D_i\}_{i=1}^\ell,\{\cB_i\}_{i=1}^\ell)$ of the multiple key-cast problem includes an acyclic
directed network $G=(V,E)$ in which each edge $e \in E$ has unit capacity (we allow multiple parallel edges to capture connectivity of higher integer capacity), a  source node $s \in V$, a collection of disjoint terminal sets $D_i \subseteq V$ for $i \in [\ell]$, each consisting of a collection of terminal nodes, and for $i \in [\ell]$ a collection of subsets of edges $\cB_i=\{\beta_{i,1},\dots,\beta_{i,{\tiny |\cB_i|}}\}$ specifying the secrecy requirements.
Source $s$ holds an unlimited collection of independent uniformly-distributed bits $M=\{b_{i}\}_i$.  
Following a convention that is common in the study of acyclic network coding, we assume that $s$ has no incoming edges, and that terminals $d \in \cup_{i \in [\ell]}D_i$ have no outgoing edges. 

\vspace{2mm}

\noindent
{\bf $\bullet$ Key Codes:}
For blocklegth $n$, network code 
$({\mathcal F},\mathcal{G})=(\{f_{e}\},\{g_{i,j}\})$
is an assignment of a (local) encoding function
$f_{e}$ for each edge $e\in E$ and a decoding function $g_{i,j}$ for each terminal $d_{i,j} \in D_i$, for $i \in [\ell]$. 
For every edge $e=(u,v)$, the edge message
$X^n_{e} \in \cX^n_{e}=[2^{n}]$ from $u$ to $v$ equals the evaluation of encoding function $f_{e}$  on inputs $X^n_{{\rm In}(u)}$;
where, for a generic node $u_0$, $X^n_{{\rm In}(u_0)}$ equals
$((X^n_{e'}:e' = (v,u_0) \in E), (M: u_0=s))$ 
 captures all information available to node $u_0$ during the communication process.
In order to ensure that $X^n_{{\rm In}(u)}$ is available to node $u$ before it encodes, communication proceeds according to a predetermined topological order on $E$.
A key code with target rate $R$ is considered successful if for each $i \in [\ell]$ and every terminal $d_{i,j} \in D_i$ the evaluation of decoding functions $g_{i,j}$ on the vector of random variables $X^n_{{\rm In}(d_{i,j})}$ equals the reproduction of a uniform random variable $K_i$ over alphabet $[2^{Rn}]$ such that the following criteria are satisfied.
First, key $K_i$ meets secrecy constraints $\cB_i$, which specifies that for every $\beta \in \cB_i$, $I(K_i;(X^n_e: e \in \beta))=0$.
Second, each terminal-set $D_i$ wants a distinct key $K_i$ such that for $i \ne i'$ key $K_i$ is independent of key $K_{i'}$, i.e., random variables $\{K_i\}_{i=1}^\ell$ are pair-wise independent (PWI), giving $I(K_i;K_{i'})=0$ for all $i \ne i'$.

\begin{definition}[Multiple key-cast feasibility]
\label{def:key}
Instance $\cI$ is said to be $(R,n)$-feasible if there exists a key code $({\mathcal F},\mathcal{G})$ with blocklength $n$ such that
\begin{itemize}
	\item {\bf Decoding:} For all $i \in [\ell]$ and all $d_{i,j} \in D_i$, $H(K_i|X^n_{{\rm In}(d_{i,j})})=0$.
	\item {\bf PWI key rate:} For all $i \in [\ell]$, $K_i$ is a uniform
	random variable with $H(K_i)=Rn$. For $i \ne i' \in \{1,\dots,\ell\}$, $I(K_i;K_{i'})=0$.
	\item {\bf Secrecy:}  For all $i \in [\ell]$, $I(K_i;(X^n_{e}:e \in \beta))=0$ for any subset $\beta \in \cB_i$.
\end{itemize}
\end{definition}

In this study, we also compare key dissemination with more traditional forms of communication in which source information is first reconstructed at the terminals and only then (perhaps) post-processed to derive a shared key $K_i$, we call that approach source-reconstructed (SR) key-dissemination.
\begin{definition}[Source-reconstructed multiple key-cast feasibility]
\label{def:trad}
Instance $\cI$ is said to be $(R,n)_\rs$-feasible if there exists a key code $({\mathcal F},\mathcal{G})$ with blocklength $n$ such that
\begin{itemize}
	\item {\bf Source reconstruction:} For all $i \in [\ell]$ and all $d_{i,j} \in D_i$, there exists a collection of source information bits $M_{i,j} \subseteq M$, such that $H(M_{i,j}|X^n_{{\rm In}(d_{i,j})})=0$, i.e., message bits in $M_{i,j}$ are decoded at terminal $d_{i,j}$.
	\item {\bf PWI key construction and rate (post-processing):} For all $i \in [\ell]$ there exists a uniform random variable $K_i$ with $H(K_i)=Rn$ such that for all $d_{i,j} \in D_i$, $H(K_i|M_{i,j}) = 0$. For $i \ne i' \in \{1,\dots,\ell\}$, $I(K_i;K_{i'})=0$.
	\item {\bf Secrecy:}  For all $i \in [\ell]$, $I(K_i;(X^n_{e}:e \in \beta))=0$ for any subset $\beta \in \cB_i$.
\end{itemize}
\end{definition}

\begin{definition}
[Multiple key-cast capacity]
\label{def:cap_k}
The (symmetric) multiple key-cast capacity of $\cI$, denoted by $\bRk(\cI)$, is the maximum $R$ for which for all ${\Delta}>0$ there exist infinitely many blocklengths $n$ such that $\cI$ is $(R-{\Delta},n)$-feasible. 
The capacity obtainable by first reconstructing source information and then post-processing the shared key is denoted by $\bRs(\cI)$ and is defined analogously.
\end{definition}

\section{Results}
\label{sec:results}

\subsection{Non-secure case}

In this section we present a protocol for multiple key-cast in the non-secure setting and compare the achievable key-rate with that of traditional protocols that reconstruct source information.
We study the problem of distributing a key to all members of a given terminal set, with distinct keys simultaneously going to distinct terminal sets. 
The distributed keys should have the property that the key held by any terminal node does not reveal any information about any key shared by terminals from a different terminal-set. 
That is, we require the keys to be pair-wise independent.


Our protocol and analysis are inspired by the analysis appearing in \cite{wang2007intersession,wang2010pairwise,shenvi2010simple},  which address coding solutions and upper bounds for $2$-unicast network coding with integral edge capacities.
Roughly speaking, we here show, for single-source instances with multiple terminal sets, that unit-rate, multiple key-cast  is possible if and only if for every terminal $d_i$ in terminal set $D_i$, and for every $j \ne i$, there exists a unit-capacity path from the source $s$ to $d_i$ that does not pass through certain cut-sets $C_j$ corresponding to terminals $d_j \in D_j$. 
The converse follows standard cut-set arguments while the achievability argument combines a two-phase process in which an initial 2-multicast linear network code is modified to guarantee, for each $j$, that all terminals in $D_j$ decode the same key $K_j$ and that $K_j$ is independent from the key decoded by any other terminal set $D_i$ for $i \ne j$. 

To specify the cut-sets $C_j$ corresponding to terminals $d_j \in D_j$ we use the following definition from \cite{wang2007intersession,wang2010pairwise,shenvi2010simple}.
Throughout, we assume a predetermined topological order on the edges of $G$.

\begin{definition}[The cut sets $C_j$]
\label{def:cutsets}
For every $j \in [\ell]$ and $d \in D_j$ for which there exist one or more edges whose removal separates $d$ from the source $s$, let $e_d$ be the separating edge of minimum topological order in $G$; otherwise, let $e_d = \phi$.
For every $j \in [\ell]$, let $C_j = \{e_d | d \in D_j, \ e_d \ne \phi\}$.
\end{definition}

The main theorem of this section suggests a combinatorial characterization for key dissemination at unit rate. Proof is given in Appendix~\ref{sec:app_non}. A rough proof outline of the achievability scheme follows below.

\begin{theorem}[Multiple key-cast]
\label{the:kd}
Consider an instance  ${\mathcal I}=(G,s,\{D_i\}_{i=1}^\ell,\{\cB_i\}_{i=1}^\ell)$  of the multiple key-cast problem with $\cB_i=\phi$ for $i \in [\ell]$ (i.e., with no security constraints).
Then $\bR(\cI) \geq 1$ if and only if for every $i, j \in [\ell]$ such that $j \ne i$ and for every terminal $d \in D_i$ there exists a unit-capacity path connecting $s$ to $d$ that does not use edges in $C_j$.  
\end{theorem}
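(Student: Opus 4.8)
The plan is to recast the whole statement in the language of linear network coding over a large field $\F_q$ with $q=2^n$ and to treat both directions through the induced linear algebra. Let the source inject a vector $m$ of independent uniform symbols, let each edge $e$ carry a symbol $X_e=v_e\cdot m$ determined by a global coding vector $v_e$, and for a node $u$ write $V_u=\mathrm{span}\{v_e:e\in\In(u)\}$ for the space of functionals it can evaluate. A rate-$1$ key for $D_i$ is a single functional $K_i=w_i\cdot m$, so the decoding requirement becomes $w_i\in\bigcap_{d\in D_i}V_d$ (every terminal of $D_i$ computes the same $K_i$), while the pairwise-independence requirement $I(K_i;K_{i'})=0$ becomes linear independence of $w_i$ and $w_{i'}$ (two functionals of a uniform source are independent iff their coefficient vectors are independent). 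The theorem thus reduces to deciding when one can choose a valid code $\{v_e\}$ together with key vectors $\{w_i\}$ meeting these two constraints.

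For the converse I would argue information-theoretically, so as to cover arbitrary (possibly nonlinear) codes and the asymptotic capacity definition. Suppose the condition fails: there are $i\ne j$ and $d\in D_i$ for which every $s$-to-$d$ path meets $C_j$, i.e.\ $C_j$ separates $s$ from $d$. Then $X^n_{\In(d)}$, and hence $K_i$, is a deterministic function of $(X^n_e:e\in C_j)$. Each edge $e_{d'}\in C_j$ is a single edge separating some $d'\in D_j$ from $s$, so at near-unit rate the bottleneck forces $H(X^n_{e_{d'}})\le n$ while $H(K_j)\ge(1-\Delta)n$ and $K_j$ is decoded from $X^n_{e_{d'}}$; hence $H(X^n_{e_{d'}}\mid K_j)\le\Delta n$ and, summing, $H((X^n_e:e\in C_j)\mid K_j)\le|C_j|\,\Delta n$. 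Since $K_i$ is a function of $(X^n_e:e\in C_j)$, this gives $I(K_i;K_j)\ge H(K_i)-H((X^n_e:e\in C_j)\mid K_j)\ge(1-\Delta)n-|C_j|\Delta n$, which is strictly positive for small $\Delta$, contradicting $I(K_i;K_j)=0$. Thus $\bR(\cI)<1$, proving necessity.

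For achievability I would construct the code in two phases, following the pairwise-independence and $2$-unicast framework of \cite{wang2007intersession,wang2010pairwise,shenvi2010simple}. The first phase fixes a random linear network code over a large $\F_q$; the path condition guarantees that for each $d\in D_i$ and each $j\ne i$ the $C_j$-avoiding path delivers to $d$ a functional lying outside $\mathrm{span}\{v_e:e\in C_j\}$, which at unit rate equals $\mathrm{span}(w_j)$ because every edge $e\in C_j$ is a bottleneck for some $d'\in D_j$ and so its global vector $v_e$ must be a scalar multiple of $w_j$. Hence each $V_d$ meets the complement of every $\mathrm{span}(w_j)$, $j\ne i$. The second phase modifies and aligns this code so that the terminals of a common set $D_i$ agree on a single key vector $w_i\in\bigcap_{d\in D_i}V_d$ and so that the resulting $\{w_i\}_{i=1}^\ell$ are pairwise linearly independent; here genericity over a large field (a Schwartz--Zippel style argument) is invoked to show the finitely many avoidance and independence constraints can be met simultaneously.

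The step I expect to be the main obstacle is this second phase: simultaneously securing the intra-set common-key property $w_i\in\bigcap_{d\in D_i}V_d$ and the inter-set independence $w_i\not\parallel w_{i'}$ for all pairs, using a single global code. Establishing that $\bigcap_{d\in D_i}V_d$ is nonempty and still contains a vector avoiding each forbidden line $\mathrm{span}(w_j)$, $j \ne i$, requires carefully coupling the per-terminal $C_j$-avoiding paths into a consistent alignment, and then arguing, by dimension counting together with a large enough field, that the avoidance conditions cut out only a non-generic (lower-dimensional) set so that a common admissible $w_i$ survives. Handling $\ell>2$ by reducing the simultaneous requirement to the pairwise conditions, and verifying that the pairwise alignments do not conflict, is the part I expect to demand the most care.
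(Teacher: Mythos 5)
Your converse is sound and is essentially the paper's argument: both rest on the observation that each $e\in C_j$ is a unit-capacity bottleneck for some $d'\in D_j$, so $X^n_e$ is (nearly) determined by $K_j$, and hence if $C_j$ separated $s$ from some $d\in D_i$ then $K_i$ would be (nearly) a function of $K_j$, contradicting pairwise independence. In fact your version, which tracks the $\Delta$-slack and concludes $I(K_i;K_j)\geq (1-\Delta)n-|C_j|\Delta n>0$, handles the asymptotic definition of $\bR(\cI)\geq 1$ more explicitly than the paper's, which argues at exact rate $R\geq 1$. Your dictionary between pairwise independence of linear keys and linear independence of coefficient vectors is also correct.

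The gap is in achievability, and it sits exactly at the step you yourself flag as the ``main obstacle'' and then defer to genericity: genericity cannot do that job. The two properties you need are \emph{alignment} (coincidence) properties: (a) all terminals of $D_i$ must share a common functional $w_i\in\bigcap_{d\in D_i}V_d$, and (b) your phase-one claim that $\mathrm{span}\{v_e:e\in C_j\}$ is one-dimensional, i.e.\ that all edges of $C_j$ carry scalar multiples of one common vector. Random or generic linear codes push coding vectors into \emph{general position}; they destroy such coincidences whenever the topology does not force them. Concretely, for two single-in-edge terminals in the same $D_i$ fed from different intermediate nodes, a generic code gives $\bigcap_{d\in D_i}V_d=\{0\}$, and for $|C_j|\geq 2$ a generic code makes $\mathrm{span}\{v_e:e\in C_j\}$ two-dimensional; a Schwartz--Zippel argument can only show that a property holding for \emph{some} choice of coefficients holds for \emph{most} choices, and your plan never exhibits any such choice. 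The paper supplies precisely the missing deterministic alignment mechanism: the source holds only two symbols $a,b$; edges are colored by induction on topological order, color $\alpha$ meaning ``transmit $a+\alpha b$'' (a feasible base code); then, in a second stage, all edges of $C_j$ together with the $j$-tight set $T_j$ (edges disconnected from $s$ by removing $C_j$) are recolored with one fresh color $\alpha_j$. The hypothesis of the theorem is used exactly where your plan has nothing: to prove the sets $T_j$ are pairwise disjoint (so the recoloring is well defined), and to prove, in the validity induction, that no edge can have all of its in-edges recolored by some $\alpha_{j'}$ with $j'\neq j$, so every edge can still compute its output. Each $d\in D_j$ then has its single incoming edge in $C_j\cup T_j$ and decodes $K_j=a+\alpha_j b$, and distinct colors give pairwise independence. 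Without constructing this recoloring, or some equivalent explicit alignment device, your outline does not close.
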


In Section~\ref{sec:compare}, we compare the achievable key rate $\bR(\cI)$ of our scheme with the maximum key-rate $\bRs(\cI)$ obtainable through source reconstruction and show a significant gap. Namely, we prove the following theorem.

\begin{theorem}[Multiple key-cast with source reconstruction]
\label{the:gap1}
Let $\epsilon >0$.
There exist instances $\cI$ of the non-secure, multiple key-cast problem that satisfy the sufficient conditions of Theorem~\ref{the:kd} for which $\bRs(\cI) \leq 3/4 + \epsilon$.
\end{theorem}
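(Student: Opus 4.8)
The plan is to exhibit an explicit two-session family ($\ell=2$) of instances that satisfy the path hypothesis of Theorem~\ref{the:kd}---so that unit-rate key-cast is feasible, $\bR(\cI)\ge 1$---while being provably weak under source reconstruction. The gadget I would build centers on a shared bottleneck whose downstream wiring hands each terminal of a set a small number of \emph{coded} symbols of the source, engineered so that (i) the two terminals of a set can agree on a common rate-$1$ coded key, (ii) the two sets' keys can be made independent, yet (iii) the two terminals of a set cannot jointly recover more than a limited supply of raw source bits. To land at $3/4+\epsilon$ rather than a coarser constant I would either tune the single gadget so its reconstruction bottleneck binds at symmetric rate $3/4$, or take $t$ loosely-coupled copies and let $t\to\infty$ so the additive $\epsilon$ absorbs integrality and blocklength slack in Definition~\ref{def:cap_k}.

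Verifying the hypothesis of Theorem~\ref{the:kd} is the routine direction: for each $d\in D_i$ and each $j\ne i$ I would read off a unit-capacity $s$--$d$ path avoiding the cut-set $C_j$ of Definition~\ref{def:cutsets} directly from the gadget's connectivity, which by Theorem~\ref{the:kd} gives $\bR(\cI)\ge 1$.

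The crux is the upper bound on $\bRs(\cI)$. The enabling observation is that in any source-reconstructed key code the shared key $K_i$ must be a deterministic function of the \emph{common} decoded bits $\bar M_i:=\bigcap_{d_{i,j}\in D_i} M_{i,j}$: since the source bits are mutually independent and $K_i$ is a function of every $M_{i,j}$, it cannot depend on any bit absent from some $M_{i,j}$. Hence $|\bar M_i|\ge H(K_i)=Rn$, and pairwise independence yields $|\bar M_1\cup\bar M_2|\ge H(K_1,K_2)=2Rn$ (up to the $\Delta n$ slack of Definition~\ref{def:cap_k}). Because every bit of $\bar M_i$ must be recoverable at \emph{all} terminals of $D_i$, it must cross every cut separating $s$ from each such terminal. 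I would then select a small family of cuts in the gadget and charge each session's common key against the capacity crossing into both of its terminals; together with $|\bar M_1\cup\bar M_2|\ge 2Rn$ this is designed to force $R_1+R_2\le 3/2$, and hence symmetric rate $R\le 3/4$.

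The main obstacle is the simultaneous tuning inside this cut argument: the network must make the common-source-reconstruction bottleneck bind exactly at $3/4$ while leaving the \emph{coded} key-cast rate at $1$. The separation is possible only because key-cast keys need \emph{not} be functions of common source bits---so the very cuts that throttle $\bRs(\cI)$ do not constrain $\bR(\cI)$---but arranging the cut capacities, and the overlap/disjointness bookkeeping for $\bar M_1\cup\bar M_2$, to yield precisely $3/4$ (rather than $1/2$ or $1$) is the delicate step. Here I expect to reuse the cut-set accounting of \cite{wang2007intersession,wang2010pairwise,shenvi2010simple}, and to reduce independence of the two reconstructed keys to disjointness of the source bits underlying $\bar M_1$ and $\bar M_2$.
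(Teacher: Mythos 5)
Your proposal has a genuine gap, and it lies exactly where you flag the ``delicate step'': no two-session gadget of the kind you describe can be tuned to give the bound $3/4+\epsilon$, and the cut-accounting you propose provably cannot yield $R_1+R_2\le 3/2$. The obstruction is structural: in any instance where key-cast at rate $1$ is feasible, every edge-cut $A$ separating $s$ from a set of terminals containing at least one terminal of $D_1$ and one of $D_2$ must satisfy $|A|\ge 2$, because the pair $(K_1,K_2)$ with $H(K_1,K_2)=2n$ is a deterministic function of $X_A$; likewise every cut separating $s$ from terminals of a single session must have capacity $\ge 1$. But these are precisely the cuts your accounting charges the common bit-sets $\bar M_1,\bar M_2$ against, so the strongest bounds obtainable from them are $R_1+R_2\le 2$ and $R_i\le 1$, a system satisfied by $R_1=R_2=1$. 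Your correct observation that $K_i$ must be a function of the common decoded bits $\bar M_i$ does not help here, because with only two sessions the scheme is free to allocate \emph{disjoint} bit-sets: in the canonical bottleneck gadget ($s$ connected to a relay $x$ by two unit edges, terminals hanging off $x$), two sessions simply take $b_1,\dots,b_n$ and $b_{n+1},\dots,b_{2n}$ respectively and achieve $\bRs(\cI)=1$. Taking $t$ loosely-coupled copies cannot create a gap that no single copy has.

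The paper's proof uses a mechanism absent from your proposal: \emph{many} sessions contending for one small bottleneck, resolved by a Plotkin-type counting lemma. The instance has $\ell=1+1/\epsilon$ \emph{singleton} terminal sets $D_i=\{d_i\}$, all fed through a capacity-$2$ cut ($s\to x$), with a unit edge from $x$ to each $d_i$. Any SR scheme assigns each $d_i$ a bit-set $M_i$ of size at most $n$ inside a pool of at most $2n$ bits; viewing the $M_i$ as binary codewords of length $2n$ and weight at most $n$, Corollary~\ref{cor:plotkin} (with $w=1/2$) produces two indices $i\ne j$ with $|M_i\cup M_j|\le (3+4\epsilon)n/2$. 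Pairwise independence then gives $2Rn=H(K_i)+H(K_j)=H(K_i,K_j)\le H(M_i,M_j)=|M_i\cup M_j|$, hence $R\le 3/4+\epsilon$. Note that the $3/4$ is not a tuned cut capacity at all; it is the combinatorial constant $2w-w^2$ at $w=1/2$, and the $\epsilon$ slack is paid for by letting the number of sessions grow like $1/\epsilon$ (a simple averaging argument shows that with a bounded number of sessions this topology forces nothing: with $\ell$ sessions the same counting only gives $R\le 1-\frac{\ell-2}{4(\ell-1)}$, which is vacuous at $\ell=2$). Key-cast, by contrast, serves unboundedly many sessions through the same bottleneck with pairwise-independent mixtures $a+\alpha_j b$. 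If you want to repair your write-up, the fix is to abandon $\ell=2$, let $\ell$ grow with $1/\epsilon$, and replace the cut accounting with a pigeonhole/Plotkin argument over the sessions' decoded bit-sets.
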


\subsection{Proof of  Theorem~\ref{the:kd}, rough outline of achievability}
For achievability, we design a two-stage encoding scheme; both stages are deterministic. First, we design a 2-multicast coding solution using a certain edge-coloring of $G$. Then, the coloring and coding scheme are modified to match our key dissemination requirements. 
Describing 2-multicast through edge-coloring is used, e.g., in \cite{fragouli2007network}.

Let the source $s$ hold 2 messages $a$ and $b$. 
In our edge-colorings, an edge $e$ colored by the color $\alpha$ represents the transmission of the linear combination $a+\alpha b$ on $e$, where $a$, $b$, and $\alpha$ are all elements of a sufficiently large field $F=[2^n]$ for blocklength $n$, and all operations are done over $F$.
Our coloring is governed by the predetermined topological order of edges in $G$.
We assume, without loss of generality, that every node in $G$ is connected from $s$.
Otherwise, one can remove such nodes from $G$ without impacting the communication protocol.

\noindent
{\bf $\bullet$ The first coloring stage:}
Consider the edge $e$ of least topological order.
Let $T_e$ be the set of edges that are disconnected from $s$ by the removal of $e$ (we call such edges $e$-tight).
We color $e$ and every $e' \in T_e$ with the color $\alpha=1$ corresponding to the message $a+\alpha b = a+b$.
Notice that the coding scheme that transmits $a+\alpha b$ for $\alpha=1$ on $e$ and on all edges stemming from $e$ in $T_e$ is a valid key code in the sense that the incoming information to any edge suffices to compute its outgoing information.
Next, we continue coloring by induction over the topological order of edges $e$ in $G$, where in each step we consider the next uncolored edge $e$ in topological order and color $e$ and the corresponding set $T_e$ (of edges disconnected from $s$ by the removal of $e$) by a new color $\alpha$, greater (by one) than all previous colors assigned, corresponding to the message $a + \alpha b$ to be communicated on $e$. 
In Appendix~\ref{sec:app_non} we prove that in any intermediate phase of our induction, any edge that has been assigned a color suffices to compute its outgoing information from its incoming information.
Our first coloring stage is depicted in Figure~\ref{fig:jj}(a). 

\noindent
{\bf $\bullet$ The second coloring stage:} To initiate our second coloring/coding stage, we now focus on the cut sets $C_j$ defined previously, and on the set of edges $e'$ that are disconnected from $s$ by the removal of $C_j$. 
We denote this latter set of edges by $T_j$, and refer to such edges as $j$-tight.
In Appendix~\ref{sec:app_non} we prove that any edge can be $j$-tight for at most one value of $j \in [\ell]$.
By the toplogical-minimality condition in the definition of edges $e \in C_j$ (Definition~\ref{def:cutsets}), it holds that $e$ is either 2-edge connected from $s$ or is an outgoing edge of $s$.
This implies, for any $j$ and $j'$ ($j$ may equal $j'$),  that distinct edges $e \in C_j$ and $e' \in C_{j'}$ will have distinct colors.
In our second stage of coloring, we assign for all $j \in [\ell]$, a new color $\alpha_j$ to all edges $e$ in $C_j$ and to all $j$-tight edges $e' \in T_j$.
As before, coloring an edge by $\alpha_j$ corresponds to the transmission of $a+\alpha_j b$ on that edge.
In Appendix~\ref{sec:app_non} we prove that this modified coloring does not impact the network coding feasibility.
Namely, any edge $e$ can compute its outgoing information from its incoming information.

\noindent
{\bf $\bullet$ The decoding of $K_j=a + \alpha_j b$ at terminals $d \in D_j$:}
To finish our proof, we need to show that for any $j \in [\ell]$, any terminal $d \in D_j$ is able to decode $K_j = a+\alpha_j b$ (of rate 1).
Notice that this collection of keys is pair-wise independent.
We here assume, without loss of generality, that all terminal nodes $d$ in $G$ have only one incoming edge. 
Otherwise, for any terminal $d \in D_j$ one can construct a new instance by adding to $G$ a new node $d'$, adding a new edge $(d,d')$, and modifying $D_j$ by removing $d$ and adding $d'$.
The new instance is solvable at rate 1 if and only if the original instance is solvable at rate 1.  
With this assumption, the single edge $e$ incoming  to $d \in D_j$ is either in $C_j$ or in $T_j$.
This follows from the observation that $d$ is separated from $s$ by the removal of its single incoming edge, and thus there exists an edge $e_{d} \in C_j$ of minimum topological order disconnecting $d$ from $s$.
As edges in $C_j$ and $T_j$ are colored by $\alpha_j$,  terminal $d \in D_j$ can decode $K_j = a+\alpha_j b$.
This concludes the rough outline of our achievability proof. Full details appear in Appendix~\ref{sec:app_non}.

\section{Secure case}

In this section, we consider (single-source) multiple key-cast in which one distributes a collection of keys $\cK=\{K_1,\dots,K_\ell\}$ to the terminals in disjoint terminal sets $\{D_1,\dots,D_\ell\}$ under the requirement that 
for each $j \in [\ell]$, the only network nodes $v \in V \setminus \{s\}$ that individually hold any information regarding key $K_j$ are the nodes $v \in D_j$. 
We study the key capacity in this setting through the lens of {\em secret sharing}.

In the secret sharing paradigm (e.g., \cite{shamir1979share,ito1989secret,beimel2011secret}) a {\em dealer}, who holds a uniformly distributed secret message, is required to distribute {\em shares} to a collection of {\em users}, giving each user one share.
Each share is a random variable computed by the dealer using the secret message and additional randomness.
An {\em access structure} ($\cA_{\tt access},\cA_{\tt no-access})$ is a predetermined collection of subsets of users, such that each subset
of users in $\cA_{\tt access}$ can jointly decode the secret and each subset of users in $\cA_{\tt no-access}$ learns nothing about the secret through attempts at joint decoding. 
For example, {\em threshold} access structures \cite{blakley1979safeguarding,shamir1979share} require that any subset of $k$ users cannot learn anything about the secret message and that any collection of $k+1$ users can jointly recover the secret.

Inspired by \cite{shah2013secure,shah2015distributed}, which design secret sharing protocols over network structures, we present  combinatorial conditions  allowing 
single-source, multiple key-cast under the security requirements specified above.

\begin{theorem}[Secure multiple key-cast]
\label{the:skd}
Consider an instance  ${\mathcal I}=(G,s,\{D_i\}_{i=1}^\ell,\{\cB_i\}_{i=1}^\ell)$ of the multiple key-cast problem such that for $i \in [k]$, $\cB_i=\{{\tt In}(v) \mid v \in V \setminus (D_i\cup \{s\})\}$.
Then $\bR(\cI) \geq 1$ if, for every terminal $d \in \cup_i D_i$, there exist two vertex-disjoint paths from $s$ to $d$, and, for every non-terminal node  $v$, there exist two edge-disjoint paths from $s$ to $v$.
Moreover, the combinatorial conditions are tight in the sense that there exist instances  ${\mathcal I}=(G,s,\{D_i\}_{i=1}^\ell,\{\cB_i\}_{i=1}^\ell)$  satisfying the conditions for which $\bR(\cI) = 1$.
\end{theorem}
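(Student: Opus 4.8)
The plan is to split the statement into two essentially independent parts: the sufficiency claim $\bR(\cI)\ge 1$ under the two connectivity hypotheses (an achievability construction), and the tightness claim (a matching converse on one particular instance). For achievability I would first fix the algebraic form of the keys so that pairwise independence comes for free: working over a large field $F=[2^n]$, let $s$ draw two uniform symbols $a,b\in F$ and set $K_i=a+\alpha_i b$ for distinct $\alpha_i\in F$, exactly as in the non-secure scheme behind Theorem~\ref{the:kd}. Since any two distinct combinations $a+\alpha_i b$ and $a+\alpha_{i'}b$ jointly invert to $(a,b)$, the map shows $(K_i,K_{i'})$ is uniform on $F^2$, so the $K_i$ are pairwise independent and each is uniform of rate $1$. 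The remaining task is to distribute these keys by a linear network code over $F$, with a pool of additional uniform masking symbols injected at $s$, whose global coding vectors meet a decoding condition and a secrecy condition simultaneously. Because the terminal sets are disjoint, a non-terminal node $v$ lies in no $D_i$ and must be independent of every $K_i$, hence of the whole pair $(a,b)$; a terminal $d\in D_i$ must reconstruct $K_i$ yet stay independent of every $K_{i'}$, $i'\ne i$, so it may learn exactly the one-dimensional projection $a+\alpha_i b$ and nothing more. In secure-network-coding terms this is a Cai--Yeung rank condition: at each node the message coordinates $(a,b)$ add no rank beyond what the masking coordinates already supply, except that at $d\in D_i$ the line $\langle(1,\alpha_i)\rangle$ becomes recoverable.

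I would realize this as a secret-sharing scheme. The two vertex-disjoint paths from $s$ to each terminal carry a $2$-out-of-$2$ sharing of $K_i$: one path transports a fresh mask and the other the masked key, so that $d$, seeing both shares on its incoming edges, cancels the mask and recovers $K_i$, while vertex-disjointness guarantees that no intermediate node ever holds both shares of the same secret. The two edge-disjoint paths from $s$ to each non-terminal $v$ are what let the masking symbols be routed and mixed so that every non-terminal's incoming view is pure mask, independent of $(a,b)$; maintaining a consistent $2$-flow to all nodes (Menger) keeps enough degrees of freedom for this. I would build the edge assignment either inductively along the topological order, mirroring the edge-coloring argument used for Theorem~\ref{the:kd} but over the enlarged symbol space $(a,b,\text{masks})$, or by choosing the coding coefficients generically over a large $F$ and arguing that the connectivity conditions make the relevant decoding and secrecy determinants nonvanishing for all but a vanishing fraction of coefficient choices.

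The hard part will be satisfying decoding, per-node secrecy, and pairwise independence \emph{at once}, because distinct terminals' shares and the masks must share unit-capacity edges: a single node can sit on one share-path of one terminal and on a different share-path of another, and one must prevent it from inadvertently combining them into an $(a,b)$-direction it is not entitled to. This is precisely the path-overlap difficulty handled in the $2$-unicast and distributed-secret-sharing works the paper cites, and it is where the asymmetry of the hypotheses is used: vertex-disjointness at terminals stops any relay from accumulating both shares of a given key, while edge-disjointness suffices at non-terminals since they only relay and need their views masked, not decoded.

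For tightness I would exhibit a small instance meeting both conditions in which some terminal $d$ has source-to-$d$ min-cut exactly $2$, realized by a cut $\{e_1,e_2\}$ whose two edges are controlled by non-terminals other than $s$. Concretely, join $s$ to two relays $x,y$ by parallel edge pairs and add $x\to d$, $y\to d$, with $D_1=\{d\}$: then $x,y$ each have two edge-disjoint paths from $s$, $d$ has two vertex-disjoint paths from $s$, and the cut is $\{(x,d),(y,d)\}$. Secrecy at the tails $x,y$ forces $I(K;X_{e_1})=I(K;X_{e_2})=0$, while $d$ decodes $K$ from the cut values, so $K$ is a function of $(X_{e_1},X_{e_2})$. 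The $2$-out-of-$2$ entropy inequality $H(K)=I(K;X_{e_1},X_{e_2})=I(K;X_{e_2}\mid X_{e_1})\le H(X_{e_2}\mid X_{e_1})\le n$ then gives $R\le 1$, so $\bR(\cI)=1$ once achievability supplies $\bR(\cI)\ge 1$ on this instance.
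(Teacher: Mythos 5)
The achievability half of your proposal has a genuine gap. You correctly set up the targets --- pairwise-independent keys of the form $a+\alpha_i b$, masked relaying so that non-terminals see pure noise, vertex-disjointness preventing any relay from holding both shares of a key --- but you stop precisely where the proof has to happen: you name the simultaneous satisfaction of decoding, per-node secrecy, and pairwise independence ``the hard part'' and offer two unexecuted strategies, neither of which works as stated. The per-terminal fresh-mask routing does not fit in the network: edges have unit capacity and the two-path systems of different terminals overlap, so a single edge would have to carry independent fresh masks for many terminals at once. The generic-coefficients fallback fails for a structural reason: for secrecy to be possible at all, the source must inject at least three independent symbols (otherwise any node receiving two independent combinations learns everything), yet a terminal $d\in D_i$ reached by exactly two vertex-disjoint unit-capacity paths receives only a $2$-dimensional span, and decoding requires the key vector to lie \emph{inside} that span --- a degeneracy (vanishing) condition that generic coefficient choices violate, not a nonvanishing determinant. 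So it is decoding, not secrecy, that genericity destroys, and the code must be designed. The paper's design is exactly the missing ingredient: vertices are colored in topological order so that $u$ gets a fresh color if and only if it is 2-vertex-connected from $s$ (Claim~\ref{claim:2vc}); the source draws three uniform symbols $w,a,b$; and the scheme maintains inductively that every node $u$ holds exactly the pair $(w+c_u a,\ a+c_u b)$, using the identity $(w+c_v a)+c_u(a+c_v b)=(w+c_u a)+c_v(a+c_u b)$ so that a newly colored node with incoming colors $c_v\ne c_{v'}$ can solve for its own pair, while color-preserving nodes merely forward. Setting $K_i=w+c_{d_i}a$ then yields decoding, secrecy (any $v\notin D_i\cup\{s\}$ sees only functions of $(w+c_v a,\ a+c_v b)$ with $c_v\ne c_{d_i}$, which are jointly independent of $K_i$), and pairwise independence all at once. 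Without this invariant or an equivalent construction, your proposal does not establish $\bR(\cI)\ge 1$.

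Your tightness argument, by contrast, is complete, correct, and simpler than the paper's. The paper's tight instance uses two two-terminal sets sharing an intermediate node $z$ and a chain of inequalities culminating in $3R\le 4-R$; your two-relay instance needs only $H(K)=I(K;X_{e_1})+I(K;X_{e_2}\mid X_{e_1})=I(K;X_{e_2}\mid X_{e_1})\le n$, where $I(K;X_{e_1})=0$ follows from secrecy at $x$ by data processing, since $X_{e_1}$ is a function of $X_{{\tt In}(x)}$. The theorem asks only for \emph{some} instance satisfying the conditions with $\bR(\cI)=1$, and nothing in the statement forces $\ell\ge 2$, so this is a legitimate (and cleaner) witness. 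Note, however, that it still leans on the unproven achievability direction for the lower bound $\bR(\cI)\ge 1$; for your specific instance that direction is easy to patch by hand (send a mask $m$ to $x$ and $K+m$ to $y$, each forwarding to $d$), but in the general statement the gap above remains.
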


We present the proof of Theorem~\ref{the:skd} below. 
In Section~\ref{sec:compare}, we compare the achievable key rate of our scheme with the maximum key-rate obtainable through source reconstruction and show a significant gap. Namely, we prove the following theorem.

\begin{theorem}[Secure multiple key-cast with source reconstruction]
\label{the:gap2}
Let $\epsilon >0$.
There exists an instance $\cI$ of the secure multiple key-cast problem that satisfies the combinatorial conditions of Theorem~\ref{the:skd} for which $\bRs(\cI) \leq 3/4 + \epsilon$.
\end{theorem}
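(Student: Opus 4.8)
The plan is to produce one explicit instance $\cI$ and argue two things about it: that it meets the connectivity hypotheses of Theorem~\ref{the:skd}, so that $\bR(\cI)\ge 1$, and that every source-reconstructed code for it obeys $\bRs(\cI)\le 3/4+\epsilon$. A simplification worth recording first is that the secrecy constraints $\cB_i$ only \emph{add} restrictions to Definition~\ref{def:trad}: any code that is secure-$(R,n)_\rs$-feasible is in particular $(R,n)_\rs$-feasible on the same underlying network after forgetting the secrecy clause. Hence $\bRs(\cI)$ for a secure instance is at most the \emph{non-secure} source-reconstructed rate of its network, and it suffices to find a single network that (a) satisfies the secure connectivity hypotheses of Theorem~\ref{the:skd} --- two vertex-disjoint $s$-paths to each terminal and two edge-disjoint $s$-paths to each non-terminal --- and (b) has non-secure source-reconstructed rate at most $3/4+\epsilon$. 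The natural candidate is the non-secure gap instance of Theorem~\ref{the:gap1}, whose network already forces $\bRs\le 3/4+\epsilon$ through a cut-set obstruction; I expect the construction to have two terminal sets with two terminals each and a single bottleneck edge-cut $C$ of total capacity $3n$.

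The next step is to make that instance \emph{secure-admissible}. I would check whether the instance of Theorem~\ref{the:gap1} already carries the required $2$-connectivity; wherever it falls short --- a terminal reachable by only one internally-disjoint path, or a cut vertex sitting in front of a non-terminal --- I would repair it by inserting parallel edges and auxiliary routing nodes that raise the \emph{local} connectivity to two while routing the new material ``around'' the bottleneck $C$ through already-saturated internal nodes, so that the value of the binding min-cut is left at $3n$. This verification is finite: one exhibits the two disjoint paths for every terminal and every non-terminal directly from the drawn topology and then invokes Theorem~\ref{the:skd} to conclude $\bR(\cI)\ge 1$.

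For the upper bound I would re-run the cut argument of Theorem~\ref{the:gap1}. Fix any $(R,n)_\rs$-feasible code and isolate the capacity-$3n$ cut $C$. The source-reconstruction clause of Definition~\ref{def:trad} gives, for each terminal $d_{i,j}$, a fixed source block $M_{i,j}\subseteq M$ with $K_i$ a deterministic function of $M_{i,j}$; since every $s$-to-$d_{i,j}$ path crosses $C$, each $M_{i,j}$ is a function of the messages on $C$. The goal is an inequality of the shape $4Rn\le 3n+o(n)$, hence $R\le 3/4+o(1)$, with the additive $\epsilon$ absorbing the $\Delta$-slack and finite-blocklength rounding of Definition~\ref{def:cap_k}. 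The hard part --- and the main obstacle --- is twofold. First, one must turn ``the same key $K_i$ is reconstructed at two terminals from source blocks routed through \emph{distinct} portions of $C$'' into a genuine near-additive lower bound on $H(X^n_C)$, rather than a mere restatement of $H(K_i)=Rn$; I would do this by pinning the four blocks $M_{1,1},M_{1,2},M_{2,1},M_{2,2}$ to prescribed overlapping-but-not-identical edge-subsets of $C$ and running a submodularity/independence argument in the spirit of the $2$-unicast cut bounds of \cite{wang2007intersession,wang2010pairwise,shenvi2010simple}, using pairwise independence $H(K_1,K_2)=2Rn$ to prevent the count from collapsing. Second, there is a structural tension between goals (a) and (b): raising connectivity to meet the strong secure hypotheses naturally wants to add capacity across exactly the cut that must remain at $3n$. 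Proving that the connectivity repairs preserve both the min-cut value and the four pinned block locations is where I expect most of the construction effort to go.
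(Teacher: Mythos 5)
Your opening ``simplification'' --- dropping the secrecy clause and reducing to the \emph{non-secure} source-reconstructed rate of the underlying network --- is the fatal step, and it cannot be repaired. The reason is that \emph{every} network satisfying the hypotheses of Theorem~\ref{the:skd} has non-secure source-reconstructed capacity at least $1$, so the class of instances you are looking for in goals (a) and (b) is empty. Indeed, two vertex-disjoint paths from $s$ to each terminal imply that the min-cut from $s$ to every terminal is at least $2$; by the multicast network coding theorem the source can therefore deliver two common messages $a,b\in F=[2^n]$ to \emph{all} terminals with full source reconstruction, and the terminal sets can post-process pairwise independent unit-rate keys $K_i=a+\beta_i b$ with distinct coefficients $\beta_i\in F$ (any two such keys are an invertible linear transform of $(a,b)$, hence independent). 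This is a valid $(1,n)_\rs$-feasible code once the secrecy constraints are erased. The same observation explains why your planned ``repair'' of the Theorem~\ref{the:gap1} instance must fail: adding a second internally disjoint path to each terminal necessarily routes capacity around the node $x$ whose $2n$-bit bottleneck was the entire source of the $3/4$ bound, and after the repair the rate-$1$ scheme above goes through. The tension you flag at the end of your proposal is not a technical difficulty to be engineered away; it is an impossibility.

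The paper's proof keeps the secrecy constraints and uses them in the entropy count itself, which is exactly what your reduction throws away. It takes the instance of Figure~\ref{fig:skd} (the tight example for Theorem~\ref{the:skd}) with $\ell=\frac{9}{\epsilon}\left(1+\frac{9}{\epsilon}\right)$ terminal sets. Writing $M_i$ for the source bits reconstructed at $d_{i1}$ and $B=\cup_i M_i$, the topology forces $H(B)\le H(X_{{\rm In}(x)})+H(X_{{\rm In}(z)}|X_{{\rm In}(x)})\le 3n$ while each $|M_i|\le 2n$. A pigeonhole step finds at least $1+\frac{9}{\epsilon}$ indices whose sets $M_i$ have essentially the same normalized weight $w$, and Corollary~\ref{cor:plotkin} then produces $i\ne i'$ with $|M_i\cup M_{i'}|\le (2w-w^2)3n+\epsilon n$. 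The decisive step --- unavailable without secrecy --- is that $M_{i'}$ is decoded at a terminal outside $D_i$, so the security requirement gives $I(K_i;M_{i'})=0$, whence
\begin{align*}
H(K_i)\ \le\ |M_i\setminus M_{i'}|\ =\ |M_i\cup M_{i'}|-|M_{i'}|\ \le\ (w-w^2)3n+\epsilon n\ \le\ \tfrac{3n}{4}+\epsilon n,
\end{align*}
using $w-w^2\le 1/4$. Note this bounds a \emph{single} key's entropy via set difference, rather than bounding $H(K_i)+H(K_j)$ via a union as in the non-secure Theorem~\ref{the:gap1}; that structural difference is precisely what the secrecy constraint buys, and no cut-set or submodularity argument of the kind you sketch can substitute for it once secrecy is discarded.
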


\subsection{Proof of Theorem~\ref{the:skd}}
The proof is inspired by and closely follows the distributed secret sharing scheme presented in \cite{shah2013secure,shah2015distributed} for the threshold $k=1$.
The suggested dissemination scheme uses a special graph coloring of the vertices in the acyclic graph $G$.

We start be defining  the graph coloring, which assigns an integer color $c_v$ to each vertex $v \in V$.
Our coloring is designed to ensure that two vertices $u$ and $v$ have distinct colors ($c_u \ne c_v$) if and only if there exist directed paths   $P(s,u)$ from $s$ to $u$ and $P(s,v)$ from $s$ to $v$ that are vertex disjoint. Here, paths $P(s,u)$ and $P(s,v)$ are vertex disjoint if the only vertex that appears in both paths is the source $s$.
 
Our coloring proceeds in the predefined topological order.
Here and below, we assume that colors are assigned in increasing linear order, i.e., each time a distinct color is assigned, it's value is one larger than the previously assigned color.
The source $s$ receives color $c_s=1$.
Each neighbor $u$ of $s$ that only has incoming edges from $s$ is assigned a unique color.
For each subsequent vertex $u$, assume, by induction, that all vertices $v$ of topological order proceeding that of $u$ have been colored. 
If $u$ has two incoming edges $(v,u)$ and $(v',u)$ such that $c_v\ne c_{v'}$, then assign a color to $u$ that is distinct from all colors previously assigned. Such a vertex is referred to as a {\em newly colored} vertex.
Otherwise, $c_u$ takes the color of its incoming neighbors, i.e., $c_u=c_v$ for (any) incoming edge $(v,u)$.
Such vertices are called {\em color preserving}.
Each neighbor $u$ of $s$ that only has incoming edges from $s$ is called color preserving (despite the fact it is assigned a distinct color in the start of the procedure).
In Claim~\ref{claim:2vc} of Appendix~\ref{sec:app_sec}, we show that a  vertex $u$ in $G$ is 2-vertex connected from $s$ (i.e., in $G$ there are two vertex-disjoint paths $P_1(s,u)$ and $P_2(s,u)$) if and only if it is newly colored. 

Claim~\ref{claim:2vc} implies that every terminal node is newly colored.
As shown below, the color of each terminal determines its key, and keys of different colors are pairwise independent.
To allow terminals in the same terminal set to decode the same key, we slightly modify the coloring scheme.
Specifically, we pick, for each terminal set $D_i$, a representative terminal $d_i \in D_i$, and we assign all terminals in $D_i$ the color $c_{d_i}$.
Thus, the color representing terminal set $D_i$ is $c_{d_i}$.
As we assume in this work that terminal nodes do not have any outgoing edges, 
the suggested modified coloring does not change the color of nodes incoming to any network node $u$.


We now present the blocklength-$n$ key distribution scheme.
Assume the graph $G$ is colored by (a subset of) colors $\{1,2,3,\dots,c\}$.
We take $n$ to be sufficiently large such that $2^n > c$.
Consider the finite field $F=[2^n]$.
The source $s$ picks three independent values $s$, $a$, $b$ uniformly at random from $F$. 
For each neighbor $v$ of $s$ that only has incoming edges from $s$, the source transmits $s+c_va$ and $a+c_vb$ to $v$ (all operations are done over $F$).
This is possible, since any node in $G$ is 2-edge connected from the source (i.e., in this case, there are two edges connecting $s$ to $v$).
We proceed by topological order and show by induction that every vertex $u$  receives what it needs to compute $s+c_ua$ and $a+c_ub$.
Consider a network node $u$ (that may also be a terminal node).
If $u$ is newly colored, then it has at least 2 incoming edges $(v,u)$ and $(v',u)$ with $c_v \neq c_{v'}$.
In this case, $v$ transmits $(s+c_va)+c_u(a+c_vb)$ on $(v,u)$ and  $v'$ transmits $(s+c_{v'}a)+c_u(a+c_{v'}b)$ on $(v',u)$.
Rearranging the terms in the linear equations above, we conclude that $u$ receives $(s+c_ua) + c_v(a+c_ub)$ and $(s+c_u a) + c_{v'}(a+c_ub)$, which (as $c_v \ne c_{v'}$) allows it to decode $s+c_ua$ and $a+c_ub$.
If $u$ is not newly colored, then $u$ is a color preserving node.
Recall that any node, including node $u$, must have at least two incoming edges (otherwise it would not be two edge or vertex connected from $s$).
Let $(v,u)$ and $(v',u)$ be two incoming edges for $u$. 
Here $v$ may equal $v'$, and since $u$ is color preserving, 
$c_u=c_v=c_{v'}$.
Thus $v$ can forward  $s+c_ua$ on $(v,u)$ and $v'$ can forward $a+c_ub$ on $(v',u)$.

At the end of this process, every vertex $u$ in the graph $G$ has received exactly two distinct messages,  $s+c_ua$ and $a+c_ub$ (or two independent linear combinations thereof). 
For each $i \in [\ell]$, define the key for terminal set $D_i$ to be $K_i=s+c_{d_i}a$.
We conclude that every terminal $d$ in $D_i$ can recover $K_i$ after the protocol is complete.

To prove secrecy, we now use the fact that every terminal node is newly colored; that is,   
for $d \in D_i$, the color $c_{d_i}$ differs from $c_v$ for any vertex $v \not \in D_i$.
As any such $v$ only receives $(s+c_{v}a)$ and $(a+c_{v}b)$ (or linear combinations thereof) during the protocol, it holds that the mutual information between $v$'s messages and $K_i$ is zero. 
Formally, for any $i \in [\ell]$, if $v \not\in D_i \cup \{s\}$, then $c_{d_i} \ne c_v$ and thus
\begin{align*}
I(X_{{\tt In}(v)};K_i)= I((s+c_va),(a+c_vb);s+c_{d_i}a)=0.
\end{align*}

To show that the bound $\bR(\cI) \geq 1$ is tight under the combinatorial conditions assumed in Theorem~\ref{the:skd}, we now present an example instance $\cI$ (depicted in Figure~\ref{fig:skd}) that satisfies the conditions for which $\bR(\cI) = 1$.
\begin{figure}[t]
\begin{center}
\vspace{-1mm}
\includegraphics[width=2\columnwidth]{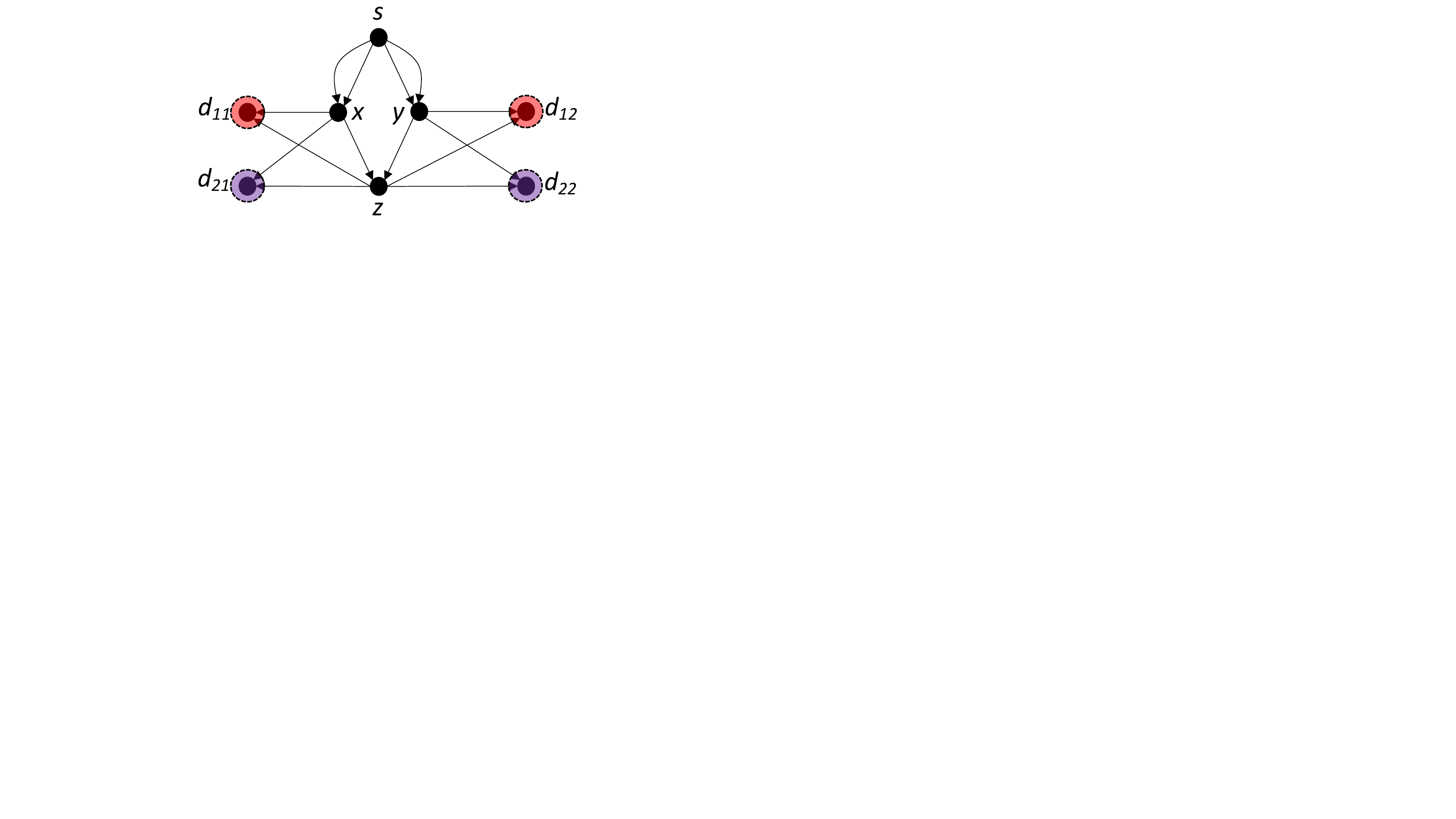}
\vspace{-75mm}
\caption{A tight example for Theorem~\ref{the:skd}.}
\label{fig:skd}
\end{center}
\end{figure}
In our example, the single source $s$ must disseminate two keys $K_1$ and $K_2$ to terminal sets $D_1=\{d_{11},d_{12}\}$ (in red) and $D_2=\{d_{21},d_{22}\}$ (in purple), respectively, such that (i) for each node $v \in \{x,y,z\}$ and for any $i\in\{1,2\}$, $I(K_i;X_{{\tt In}(v)})=0$, and (ii) for each $i,j \in \{1,2\}$, $i \ne j$, and any terminal node $d \in D_j$, $I(K_i;X_{{\tt In}(d)})=0$.
In Figure~\ref{fig:skd}, all edges have capacity 1.
Note that each terminal has two vertex-disjoint paths from $s$, and all nodes are two edge-connected from $s$.
We show that the maximum achievable key rate in this case is 1.


Consider any secret dissemination protocol.
For any vertex $v$, let $X_{{\tt In}(v)}$ be the incoming information to $v$ during the protocol, and for any edge $(u,v)$ let $X_{uv}$ be the information transmitted on $(u,v)$.
Here, as the network is acyclic, we consider communication according to topological order on $G$.
We next present a number of information inequalities that we  use to prove our assertion.

First consider edges $(x,z)$ and $(y,z)$, and note that $H(X_{{\tt In}(z)}) \leq H(X_{xz})+H(X_{yz})$.
Without loss of generality, let $H(X_{xz}) \geq H(X_{yz})$.
Then 
$H(X_{xz}) \geq 0.5 H(X_{{\tt In}(z)})$
and 
$H(X_{yz}|X_{xz}) \leq 0.5 H(X_{{\tt In}(z)})$.
%

Moreover, we now have that
\begin{align*}
H(X_{{\tt In}(x)},X_{{\tt In}(z)}) & = H(X_{{\tt In}(x)})+H(X_{{\tt In}(z)}|X_{{\tt In}(x)}) \\
& \leq 2 + H(X_{xz},X_{yz}|X_{{\tt In}(x)}) \\
& \leq 2+ H(X_{xz},X_{yz}|X_{xz})\\
& = 2+ H(X_{yz}|X_{xz})\\
& \leq 2 + 0.5H(X_{{\tt In}(z)}).
\end{align*}
Given our security requirements and graph topology, we know that $X_{x,d_{11}}$ is independent of $K_1$, that $X_{z,d_{11}}$ is independent of $K_1$, and that $K_1$ is a function of $X_{x,d_{11}}$ and $X_{z,d_{11}}$.
These observations imply that 
\begin{align*}
H(K_1) & = H(K_1|X_{z,d_{11}}) \leq  H(K_1,X_{x,d_{11}}|X_{z,d_{11}})\\
& = H(X_{x,d_{11}}|X_{z,d_{11}})+H(K_1|X_{x,d_{11}}X_{z,d_{11}}) \\
& = H(X_{x,d_{11}}|X_{z,d_{11}}).
\end{align*}
Similarly $H(K_1) \leq H(X_{z,d_{11}}|X_{x,d_{11}})$, and thus
\begin{align*}
H(X_{z,d_{11}},&X_{x,d_{11}})\geq \\
& H(X_{x,d_{11}}|X_{z,d_{11}})+H(X_{z,d_{11}}|X_{x,d_{11}}) \geq 2H(K_1).
\end{align*}
This, together with our security assumption that the incoming information to $d_{11}$ is independent of $K_2$, now implies that 
\begin{align*}
H(X_{{\tt In}(x)},X_{{\tt In}(z)}) & \geq H(X_{x,d_{11}},X_{z,d_{11}},K_2) \\
& = H(X_{x,d_{11}},X_{z,d_{11}})+H(K_2) \\
& \geq 2H(K_1)+H(K_2).
\end{align*}
Denote the key rate $H(K_1)=H(K_2)$ by $R$.
Finally, using the fact that $X_{{\tt In}(z)}$ is independent of $K_1$ we have
\begin{align*}
H(X_{{\tt In}(x)},X_{{\tt In}(z)}) & = H(X_{{\tt In}(x)},X_{{\tt In}(z)},K_1) \geq  H(X_{{\tt In}(z)},K_1)\\
& = H(X_{{\tt In}(z)})+H(K_1) = H(X_{{\tt In}(z)})+R.
\end{align*}
We now have the following inequalities
\begin{itemize}
\item $H(X_{{\tt In}(x)},X_{{\tt In}(z)}) \geq 3R$
\item $H(X_{{\tt In}(x)},X_{{\tt In}(z)}) \leq 2 + 0.5H(X_{{\tt In}(z)})$
\item $H(X_{{\tt In}(x)},X_{{\tt In}(z)}) \geq R+H(X_{{\tt In}(z)})$
\end{itemize}
The second and third inequalities above together imply that $2H(X_{{\tt In}(x)},X_{{\tt In}(z)}) \leq 4 + H(X_{{\tt In}(z)}) \leq 4 + H(X_{{\tt In}(x)},X_{{\tt In}(z)})-R$, implying that $H(X_{{\tt In}(x)},X_{{\tt In}(z)}) \leq  4 -R$. Combining this with the first inequality gives $3R \leq H(X_{{\tt In}(x)},X_{{\tt In}(z)}) \leq 4-R$ which proves our assertion that $R \leq 1$.

\section{Limitations of requiring source reconstruction}
\label{sec:compare}

In this section we show that in both the non-secure and secure settings the requirement for source reconstruction can significantly reduce the achievable key-rate when compared to that achievable without requiring source reconstruction.
We first present a technical lemma similar in nature to the Plotkin bound \cite{plotkin1960binary}, which states that {\em large} binary codes must have pairs of codewords with {\em small} total support. The lemma is proven in Appendix~\ref{sec:app_compare}.
\begin{lemma}
\label{lem:plotkin}
Any size-$M$, blocklength-$n$ binary code in which codewords are limited to Hamming weight $wn$ contains a pair of codewords $x=(x_1,\dots,x_n)$ and $x'=(x'_1,\dots,x'_n)$ such that the union of their support (i.e., the set $\{i \in [n] \mid x_i =1\} \cup \{i \in [n] \mid x'_i=1\}$) is of size at most $nw(2-w)\cdot\left(1+\frac{1}{M-1}\right)$.
\end{lemma}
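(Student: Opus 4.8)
The plan is to run the double-counting (averaging) argument underlying the classical Plotkin bound, but tracking support-intersections rather than distances. First I would reduce the statement to one about inner products. For two weight-$wn$ codewords $x,x'$, write $\langle x,x'\rangle=|\{i:x_i=x'_i=1\}|$ for the size of the intersection of their supports; then inclusion--exclusion gives that the union of their supports has size $2wn-\langle x,x'\rangle$. Thus the claimed bound is equivalent to producing a pair with $\langle x,x'\rangle \ge 2wn-nw(2-w)\bigl(1+\tfrac{1}{M-1}\bigr)=nw^2-\tfrac{nw(2-w)}{M-1}$, and for this it suffices to show that the \emph{average} of $\langle x_i,x_j\rangle$ over all $\binom{M}{2}$ unordered pairs meets this threshold, since some pair must then lie at or above the average.

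Next I would evaluate $\sum_{i<j}\langle x_i,x_j\rangle$ by exchanging the order of summation from rows (codewords) to columns (coordinates). For each coordinate $k\in[n]$ let $m_k$ denote the number of codewords carrying a $1$ in position $k$; a pair contributes $1$ at coordinate $k$ exactly when both of its codewords are $1$ there, so $\sum_{i<j}\langle x_i,x_j\rangle=\sum_{k=1}^n\binom{m_k}{2}$. The constant-weight hypothesis supplies the single linear constraint $\sum_{k=1}^n m_k=Mwn$.

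The crux is then a convexity estimate. Among nonnegative reals of fixed sum, $\sum_k m_k^2$ is minimized by the uniform assignment, so by Cauchy--Schwarz $\sum_k m_k^2\ge(\sum_k m_k)^2/n=M^2w^2n$. Substituting into $\sum_k\binom{m_k}{2}=\tfrac12\bigl(\sum_k m_k^2-\sum_k m_k\bigr)$ and dividing by $\binom{M}{2}=M(M-1)/2$ shows the average inner product is at least $\tfrac{wn(Mw-1)}{M-1}$. Since $Mw-1\ge Mw-2$, this exceeds the required threshold $\tfrac{wn(Mw-2)}{M-1}=nw^2-\tfrac{nw(2-w)}{M-1}$; feeding the chosen pair back through the identity union $=2wn-\langle x,x'\rangle$ then gives a support union of at most $\tfrac{wn(M(2-w)-1)}{M-1}\le nw(2-w)\bigl(1+\tfrac{1}{M-1}\bigr)$, as claimed.

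I do not anticipate a genuine obstacle, since the method is standard; the only points demanding care are applying the convexity bound in the correct direction and the bookkeeping confirming that the (in fact marginally stronger) averaged estimate implies the stated inequality. Should the hypothesis be intended as weight \emph{at most} $wn$, a one-line monotonicity remark --- that $w\mapsto w(2-w)$ is increasing on $[0,1]$ --- reduces the general case to the constant-weight computation above.
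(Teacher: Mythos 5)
Your core computation is correct, and it is essentially the paper's own argument run in complementary coordinates: the paper double-counts common \emph{zeros} (letting $M_i$ be the number of codewords with $x_i=0$, using $\sum_i M_i \ge M(1-w)n$ and convexity of $t\mapsto\binom{t}{2}$), while you double-count common \emph{ones} and use Cauchy--Schwarz; for exact-weight-$wn$ codes the two calculations yield the identical intermediate bound $nw(2-w)+\frac{nw(1-w)}{M-1}$, which is then weakened to the stated form.

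The genuine gap is your treatment of the actual hypothesis. ``Limited to Hamming weight $wn$'' means weight \emph{at most} $wn$, and this is the case the paper needs: in the proofs of Theorems~\ref{the:gap1} and~\ref{the:gap2} the codewords are characteristic vectors of decoded bit-sets whose sizes vary from terminal to terminal. Your monotonicity remark only reduces the case where all codewords share one common weight $w'n\le wn$; with unequal weights both pillars of your reduction fail. The identity $|\{i:x_i=1\}\cup\{i:x'_i=1\}|=2wn-\langle x,x'\rangle$ degrades to an inequality, and the constraint $\sum_k m_k = Mwn$ degrades to $\sum_k m_k \le Mwn$, which points the wrong way for Cauchy--Schwarz to lower-bound the average intersection. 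Indeed, your ``equivalent'' target statement is then simply false: take the $M$ standard basis vectors with $n$ large and $wn=n/2$; every pairwise intersection is $0$, which is strictly below the threshold $nw^2-\frac{nw(2-w)}{M-1}$ once $M\ge 5$, yet every union has size $2$ and satisfies the lemma comfortably. The fix is short but must be said: pad each codeword with additional $1$s up to weight exactly $wn$. Padding only enlarges supports, so a pair with small support-union among the padded vectors gives a pair with small support-union among the originals (and if two padded vectors coincide, the corresponding original pair has union at most $wn\le nw(2-w)$, which is within the bound). Alternatively, one can argue on common zeros as the paper does, where the at-most-$wn$ hypothesis yields $\sum_i M_i\ge M(1-w)n$, an inequality pointing in exactly the direction the convexity step requires --- this is what makes the complement formulation the more robust of the two.
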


In Appendix~\ref{sec:app_compare}, we prove Theorems~\ref{the:gap1} and \ref{the:gap2}, using the following corollary of Lemma~\ref{lem:plotkin} obtained by setting $M-1=\frac{1}{\epsilon} \geq \frac{w(2-w)}{\epsilon}$.
\begin{corollary}
\label{cor:plotkin}
Let $\epsilon>0$.
Any blocklength $n$ binary code of size $M = 1+\frac{1}{\epsilon}$ in which codewords are limited to Hamming weight $wn$ contains a pair of codewords $x=(x_1,\dots,x_n)$ and $x'=(x'_1,\dots,x'_n)$ such that the union of their support (i.e., the set $\{i \in [n] \mid x_i =1\} \cup \{i \in [n] \mid x'_i=1\}$) is of size at most $n(2w-w^2)+\epsilon n$.
\end{corollary}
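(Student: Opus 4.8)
The plan is to prove the lemma by a Plotkin-style averaging argument carried out coordinate by coordinate. The key observation is that, for two codewords $x$ and $x'$, the size of the union of their supports equals $n$ minus the number of coordinates in which both are zero. I would therefore compute the \emph{average} of $|\mathrm{supp}(x)\cup\mathrm{supp}(x')|$ over all $\binom{M}{2}$ unordered pairs of distinct codewords; since some pair must attain at most the average, it suffices to show that this average is bounded by the claimed quantity. (If $M=1$ the statement is vacuous, and I would also note at the outset that whenever the target bound exceeds $n$ the claim is trivial, since every union of supports is at most $n$; a short check shows this trivial regime is exactly $M(1-w)^2\le 1$, which will let me sidestep a degenerate case later.)

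First I would set up the column count. For each coordinate $i\in[n]$ let $m_i$ be the number of codewords having a $1$ in position $i$. The number of pairs having at least one $1$ in coordinate $i$ is $\binom{M}{2}-\binom{M-m_i}{2}$, so summing over $i$ and dividing by $\binom{M}{2}$ gives the identity
\[
\frac{1}{\binom{M}{2}}\sum_{\{x,x'\}}\bigl|\mathrm{supp}(x)\cup\mathrm{supp}(x')\bigr| = n-\frac{1}{\binom{M}{2}}\sum_{i=1}^{n}\binom{M-m_i}{2}.
\]
The next step is to lower-bound $\sum_i\binom{M-m_i}{2}$, since that drives the average down. The weight constraint gives $\sum_i m_i\le Mwn$, hence the mean $\bar m=\tfrac1n\sum_i m_i$ is at most $Mw$. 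Because $x\mapsto\binom{M-x}{2}=\tfrac{(M-x)(M-x-1)}{2}$ is a convex quadratic, Jensen's inequality yields $\tfrac1n\sum_i\binom{M-m_i}{2}\ge\binom{M-\bar m}{2}$, and since $\binom{k}{2}$ is increasing for $k\ge\tfrac12$ (the relevant range once we assume $M(1-w)\ge1$) and $M-\bar m\ge M(1-w)$, this gives $\sum_i\binom{M-m_i}{2}\ge n\binom{M(1-w)}{2}$.

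The remaining work is pure algebra: substituting this bound into the identity shows the average union is at most $n\bigl[1-\binom{M(1-w)}{2}/\binom{M}{2}\bigr]$. Writing $q=1-w$ and factoring the numerator as $M-1-Mq^{2}+q=(1-q)\,[M(1+q)-1]=w\,[M(2-w)-1]$, this collapses to $\tfrac{nw[M(2-w)-1]}{M-1}$, which is strictly below $nw(2-w)\tfrac{M}{M-1}=nw(2-w)\bigl(1+\tfrac{1}{M-1}\bigr)$. Thus the average, and hence some pair, lies at or below the claimed value.

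I expect the main obstacle to be bookkeeping rather than any deep idea: I must get the optimization direction right (to bound the average from \emph{above} I need to \emph{minimize} $\sum_i\binom{M-m_i}{2}$, which the uniform configuration $m_i=Mw$ achieves by convexity, while the weight budget is used up to its maximum $Mwn$), and then carry the factorization through cleanly to recognize $w(2-w)$. The non-integrality of $Mw$ causes no trouble, since Jensen applies to the real extension of $\binom{M-x}{2}$; and the only genuine edge case—when $M(1-w)$ is too small for the monotonicity of $\binom{\cdot}{2}$—coincides with the trivial regime where the target already exceeds $n$, so it can be dismissed up front.
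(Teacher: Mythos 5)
Your proof is correct and takes essentially the same route as the paper, which obtains this corollary by specializing Lemma~\ref{lem:plotkin}: your coordinate count $\sum_i \binom{M-m_i}{2}$ is exactly the paper's $\sum_i \binom{M_i}{2}$ (with $M_i$ the number of zeros in column $i$), the convexity step with the extremal configuration $m_i = Mw$ is identical, and arguing via the average union over pairs rather than the maximum common-zero count is an equivalent repackaging (your dismissal of the $M(1-w)<\tfrac12$ regime via the trivial case $M(1-w)^2\le 1$ is in fact slightly more careful than the paper's own wording of the minimization step). The one step to make explicit at the end is the substitution that lands on the corollary's stated bound: with $M-1=1/\epsilon$, your bound is at most $nw(2-w)\left(1+\epsilon\right)=n(2w-w^2)+\epsilon n\, w(2-w)\le n(2w-w^2)+\epsilon n$, using $w(2-w)\le 1$.
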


\section{Concluding remarks}

In this work, we study the multiple key-cast problem in both the secure and non-secure settings.
For both settings, we present combinatorial conditions that allow multiple key-cast at unit rate. 
In the non-secure setting, our conditions are tight and characterize the key-rate.
In the secure case, we show that the analysis is tight in the sense that there exist instances satisfying the combinatorial conditions for which unit key-rate is optimal.
Our model assumes acyclic graphs with edge capacities that are integer multiples of the studied (unit) key-rate; both assumptions are used in the combinatorial coloring schemes and their analysis. 
Efforts to extend the analysis to cyclic graphs using, e.g., ideas from \cite{wang2007intersession,wang2010pairwise,shah2013secure,shah2015distributed}, or to general edge capacities are ongoing.
Our model assumes the distribution of a pair-wise independent collection of keys and considers, in the secure setting, a limited eavesdropper that controls a single network node.
Consideration of other forms of independence beyond pair-wise independence (e.g., $k$-wise independence)  under stronger eavesdropping models is also a subject of ongoing studies.


\appendix

\subsection{Proof of Theorem~\ref{the:kd}}
\label{sec:app_non}

The proof is inspired by \cite{wang2007intersession,wang2010pairwise,shenvi2010simple}, which address coding solutions and converses for $2$-unicast network coding with integral edge capacities. We start with the converse. 
Consider any key dissemination protocol of rate $R \geq 1$.
For any vertex $v$, let $X_{\tt In(v)}$ be the incoming information to $v$ during the protocol.
For any edge $e$, let $X_{e}$ be the information transmitted on $e$.
For any edge set $A$ let $X_{A}=(X_e: e \in A)$ be the information transmitted on edges $e \in A$.
We first note that for any edge $e$ in $C_j$ it must hold that $H(K_j|X_e)=H(X_e|K_j)=0$;
this follows since $e=e_d$ for some terminal $d \in D_j$ that requires key $K_j$ of rate $R \geq 1$ and $e$ has unit capacity.
It follows that $H(K_j|X_{C_j})=H(X_{C_j}|K_j)=0$.
Assume now, in contradiction, that there exist $i, j \in [\ell]$, $j \ne i$, and a terminal $d \in D_i$ such that $C_j$ separates $s$ from $d$. 
This implies, by our decoding requirements, that $H(K_i|X_{C_j})=0$.
However, as $H(X_{C_j}|K_j)=0$, we conclude that $H(K_i|K_j)=0$.
Hence, by the pairwise independence requirements, $R=H(K_i)=H(K_i|K_j)=0$, which contradicts our assumption that $R \geq 1$.

For achievability, we design a two-stage encoding scheme; both stages are deterministic.
Parts of the presentation below are repeated from Section~\ref{sec:results} for completeness.
First, we design a 2-multicast coding solution using a certain edge-coloring of $G$. 
Then, the coloring and coding scheme are modified to match our key dissemination requirements. 

Let the source $s$ hold 2 messages, $a$ and $b$. 
In our edge-colorings, an edge $e$ colored by the color $\alpha$ represents the transmission of the linear combination $a+\alpha b$ on $e$, where $a$, $b$, and $\alpha$ are all elements of a sufficiently large field $F=[2^n]$ for blocklength $n$, and all operations are over $F$.
Our coloring is governed by the predetermined topological order of edges in $G$.
We assume, without loss of generality, that every node in $G$ is connected from $s$.
Otherwise, one can remove such nodes from $G$ without impacting the communication protocol.

\noindent
{\bf $\bullet$ The first coloring stage:}
Consider the edge $e$ of least topological order.
Let $T_e$ be the set of edges that are disconnected from $s$ by the removal of $e$; we call such edges $e$-{\em tight}.
We color $e$ and every $e' \in T_e$ with the color $\alpha=1$ corresponding to the message $a+\alpha b = a+b$.
Notice that the coding scheme that transmits $a+\alpha b$ for $\alpha=1$ on $e$ and on all edges stemming from $e$ in $T_e$ is a valid key code in the sense that the incoming information to any edge suffices to compute its outgoing information.
Next, we continue coloring by induction over the topological order of edges $e$ in $G$.
In each step, we consider the next uncolored edge $e$ in topological order.
We color $e$ and the corresponding set $T_e$ (of edges disconnected from $s$ by the removal of $e$) by a new color $\alpha$, greater (by one) than all previous colors assigned; color $\alpha$ corresponds to the message $a + \alpha b$ to be communicated on $e$. 
Below, we prove that in any intermediate phase of our induction, any edge that has been assigned a color suffices to compute its outgoing information from its incoming information.
Our first coloring stage is depicted in Figure~\ref{fig:jj}(a). 

\noindent
{\bf $\bullet$ Validity of the encoding corresponding to the first coloring stage:}
Assume a partial coloring of the edges of $G$, and let $e$ be the uncolored edge with minimum topological order.
Let $\alpha$ be the distinct color that we now assign to $e$ and the $e$-tight edges in $T_e$.
We first show that all edges in $T_e$ are previously uncolored. 
Assume, otherwise, that there is a colored edge $e' \in T_e$.
This implies that the topological order of $e'$ is greater than that of $e$.
This, in turn, implies that $e' \in T_{e^*}$ for an edge $e^*$ that was previously assigned a color, or equivalently, that $e'$ is disconnected from $s$ by the removal of $e^*$.
Thus, every path from $s$ to $e'$ must first pass through $e^*$ and then through $e$.
However, as $e \not \in T_{e^*}$, there is a path connecting $s$ and $e$ that does not pass through $e^*$.
This implyies a path from $s$ to $e'$ that does not pass through $e^*$, in contradiction to $e' \in T_{e^*}$.

We now show that $e$ can compute its outgoing message of $a+\alpha b$ given its incoming information.
If $e$ is an outgoing edge of $s$, then the outgoing information on $e$ can be computed by $s$ as $s$ holds both $a$ and $b$.
Otherwise, note that all incoming edges of $e$ have been assigned colors.
It cannot be the case that $e$ has one incoming edge $e'$ with color $\alpha' < \alpha$ as otherwise the removal of $e'$ would have disconnected $e$ from $s$ and thus $e$ would have been colored in a previous stage of the inductive process.
It also cannot be the case that $e$ has more than one incoming edge and that all incoming edges $e'$ of $e$ have the same color.
In that case, it would hold that all incoming edges $e'$ to $e$ are in the set $T_{e^*}$ for an edge $e^*$ previously colored by the inductive process;
this would imply that all such $e'$ are disconnected from $s$ by the removal of $e^*$, and thus $e$ itself is in $T_{e^*}$.
If $e$ were in $T_{e^*}$, $e$ would have been colored in a previous stage of the inductive process.
We are left with the case that $e$ has two incoming edges with different colors.
In this case, as the information on these edges is independent, the tail vertex of $e$ can compute the outgoing message $a+ \alpha b$.

This concludes the first stage of our coloring/coding process.
After this first coloring, any edge $e=(u,v)$ that is 2-edge connected from $s$ (i.e., for which there exist at least two edge-disjoint paths connecting $s$ and $u$) cannot be in a set $T_{e^*}$ for any other edge $e^*$;  otherwise, by our definitions, edge $e$ is disconnected from $s$ by the removal of the edge $e^*$, and edge $e$ is not 2-edge connected from $s$.
The same holds for outgoing edges of $s$.
Thus, outgoing edges of $s$ and edges $e$ that are 2-edge connected from $s$ must have distinct colors.

\noindent
{\bf $\bullet$ The second coloring stage:} To initiate our second coloring/coding stage, we now focus on the cut sets $C_j$, defined previously, and on the set of edges $e'$ that are disconnected from $s$ by the removal of $C_j$. 
We denote this latter set of edges by $T_j$, and refer to such edges as $j$-{\em tight}.
We later prove that any edge can be $j$-tight for at most one value of $j \in [\ell]$.
By the topological-minimality condition in the definition of edges $e \in C_j$ (Definition~\ref{def:cutsets}), it holds that $e$ is either 2-edge connected from $s$ or is an outgoing edge of $s$.
By the above discussion, this implies for any $j$ and $j'$ ($j$ may equal $j'$),  that distinct edges $e \in C_j$ and $e' \in C_{j'}$ must have distinct colors.
In our second stage of coloring, for each $j \in [\ell]$, we assign a new color $\alpha_j$ to all edges $e$ in $C_j$ and to all $j$-tight edges $e' \in T_j$.

\noindent
{\bf $\bullet$ The sets $T_j$ and $T_{j'}$ for $j \ne j'$ are disjoint:} 
Before we discuss the validity of the coding scheme corresponding to the modified coloring of the second phase, we first show that any edge can be $j$-tight for at most one value of $j \in [\ell]$.
The proof is depicted in Figure~\ref{fig:jj}(b).
Assume otherwise, and let $e=(u,v)$ be an edge that is both $j$ and $j'$ tight.
Consider any path $P$ from $s$ to $e$.
The path must intersect both $C_j$ and $C_{j'}$.
Assume (without loss of generality) that the edge in $P$ that is farthest from $s$ (i.e., of maximum topological order) and intersects $C_j \cup C_{j'}$ is from $C_j$ and denote this edge by $e_j=(u_j,v_j)$.
Denote (one of) the edges in $P \cap C_{j'}$ by $e_{j'}$.
As $e_j \in C_j$ there must be a terminal $d_j \in D_j$ that is disconnected from $s$ by the removal of $e_j$.
By our assumptions on $G$, there exists a path $P'$ connecting $d_j$ with $s$ that does not intersect $C_{j'}$. 
As $P'$ must pass through $e_j$, the path from $s$ to $e$ that first uses the portion of $P'$ connecting $s$ and $e_j$ and then uses the portion of $P$ from $e_j$ until $e$, connects $s$ with $e$ and does not include any edges from $C_{j'}$.
The existence of such a path contradicts the assumption that $e$ is $j'$-tight.

\begin{figure}[t]
\begin{center}
\vspace{-10mm}
\includegraphics[width=2\columnwidth]{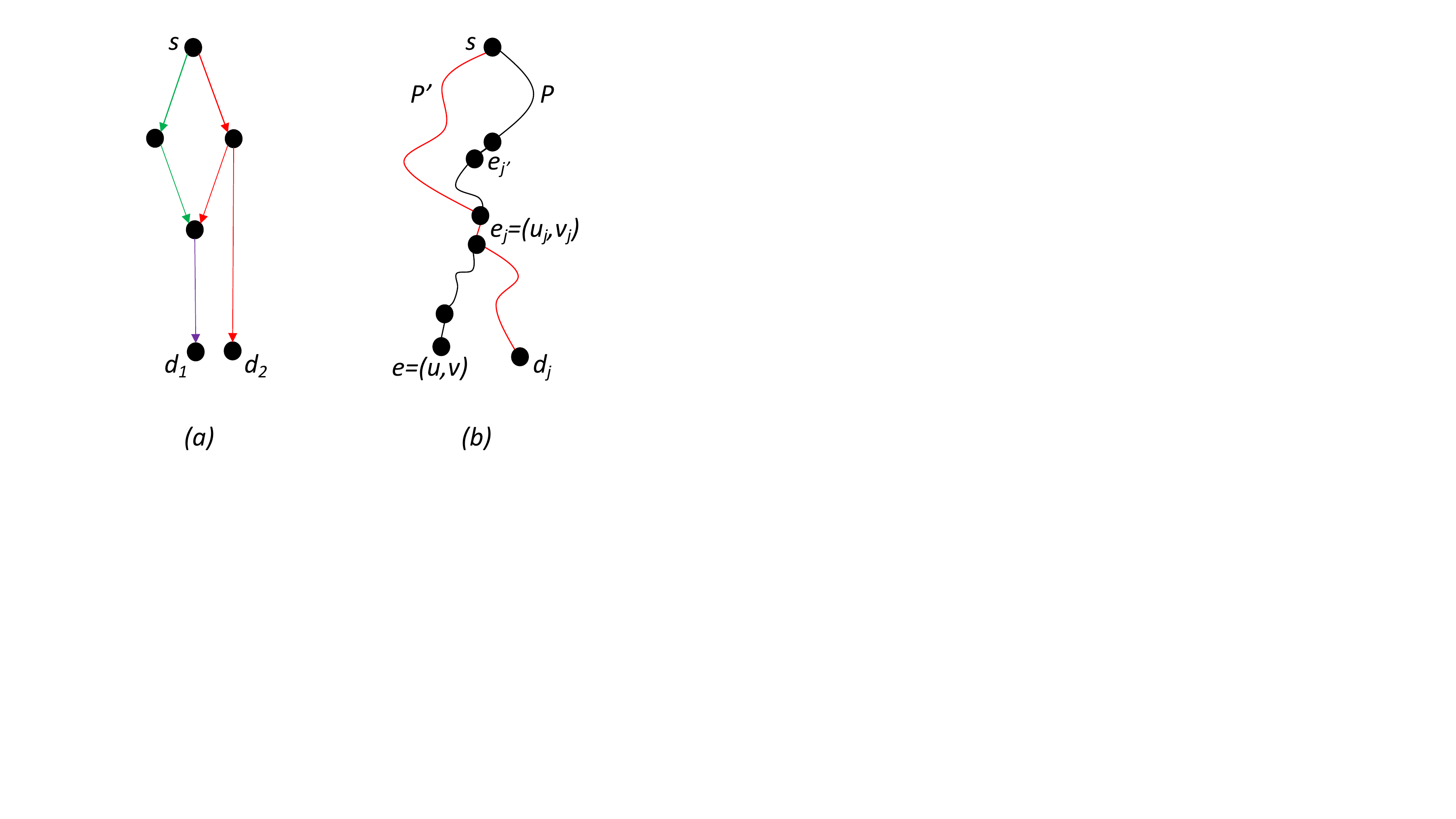}
\vspace{-48mm}
\caption{(a) A depiction of the first coloring phase in the proof of Theorem~\ref{the:kd}. (b) A depiction showing any edge can be $j$-tight for at most one value of $j \in [\ell]$ (from the proof of Theorem~\ref{the:kd}). The red path connects $s$ with $e$ without intersecting $C_{j'}$.}
\label{fig:jj}
\end{center}
\end{figure}

\noindent
{\bf $\bullet$ Validity of the encoding corresponding to the second coloring stage:} 
We next prove that our modified assignment of colors in the second coloring stage does not impact the network coding feasibility;
that is, we prove that, any edge $e$ can compute its outgoing information from its incoming information.
We proceed by the inductive order used in the first stage of coloring.
For the edge $e$ of minimum topological order,  $e$ is an outgoing edge of $s$; therefor, whether or not  $e$'s color has changed in the second stage of coloring, both $e$ and  all $e'$ in $T_e$ can  compute their outgoing information from their incoming information.
Below, we prove that in any intermediate phase of our induction, any edge that could previously  compute its outgoing information from its incoming information can do so after the modified coloring of the second stage.

Consider an edge $e$ and set $T_e$ that were assigned a new color in an intermediate step of the first coloring stage.
We consider several cases.

\noindent
{\bf $-$ The color of $e$ has changed between stages:}
If the color of $e$ has changed, then either $e \in C_j$ or $e \in T_j$.
If $e \in T_j$, then it is disconnected from $s$ by the removal of $C_j$, and thus all incoming edges $e'$ to $e$ must also be in $T_j$ or $C_j$. This implies that the incoming information to $e$ equals its outgoing information. 
If $e \in C_j$, then it is either an outgoing edge of $s$, in which case it can compute its outgoing information given the messages of $s$, or it is 2-edge connected from the source $s$, in which case it has at least 2 incoming edges.
Consider the incoming edges to $e$.
Some of these edges may have changed color in the second coloring stage while others may have preserved their original colors.
It cannot be the case that all incoming edges to $e$ changed color to $\alpha_{j'}$ for $j' \ne j$, as otherwise $e$ is disconnected from $s$ by the removal of $C_{j'}$, implying (by the definition of $C_j$) a terminal 
$d_j \in D_j$ that is disconnected from $s$ by the removal of $C_{j'}$.
This contradicts our assumptions on $G$.
It also cannot be the case that all incoming edges of $e$ have an identical color $\alpha$ which is unchanged from the previous stage of coloring.
This follows from the inductive analysis of the first coloring stage.
Namely, in such a case, edge $e$ is disconnected by the removal of $\alpha$-colored edges and thus would have been in $T_{e^*}$ for some edge $e^*$, implying that $e$ would not have received a new color in the first stage of coloring (in contradiction to our assumption on $e$).
Thus, $e$ must have 2 incoming edges with different colors (either two unchanged colors, one unchanged and one changed color, or two  colors that have been changed in the second stage).
This implies two incoming messages to $e$ which are independent, allowing the tail vertex of $e$ to compute the outgoing information on $e$.

\noindent
{\bf $-$ The color of $e$ did not change between stages:}
Finally, we consider an edge $e$ that did not change color between the first and second phase.
This case is similar to the analysis above.
If $e$ is an outgoing edge of $s$, then it can compute its outgoing information given the messages of $s$.
Otherwise, using the analysis of the first coloring phase, it cannot be the case that $e$ has only one incoming edge $e'$.
Consider the incoming edges to $e$.
Some of these edges may have changed color in the second coloring stage and some may have preserved their original colors.
It cannot be the case that all incoming edges to $e$ changed color to $\alpha_{j}$, as otherwise $e$ is disconnected from $s$ by the removal of $C_{j}$, implying that $e \in T_{j}$; this gives a contradiction since if $e \in T_j$ then $e$ would have changed color between coloring stages.
It also cannot be the case that all incoming edges of $e$ have the same color $\alpha$ which is unchanged by the second stage coloring. This follows from the inductive analysis of the first coloring stage.
Thus, again, $e$ must have 2 incoming edges with different colors (either two unchanged colors, one unchanged and one changed color, or two colors that have been changed in the second stage).
This implies two incoming messages to $e$ that are independent, allowing the tail vertex of $e$ to compute the outgoing information on $e$.
This concludes the analysis of the second stage of our coloring/coding process.

\noindent
{\bf $\bullet$ The decoding of $K_j=a + \alpha_j b$ at terminals $d \in D_j$:}
To finish our proof, we need to show that for any $j \in [\ell]$, any terminal $d \in D_j$ is able to decode $K_j = a+\alpha_j b$ (of rate 1).
Notice that this collection of keys is pair-wise independent.
As described in the body of this work, we here assume, without loss of generality, that all terminal nodes $d$ in $G$ have only one incoming edge. 
With this assumption, the single incoming edge $e$ to $d \in D_j$ is either in $C_j$ or in $T_j$.
This follows from the observation that $d$ is separated from $s$ by the removal of its single incoming edge, and thus there exists an edge $e_{d} \in C_j$ of minimum topological order disconnecting $d$ from $s$.
As edges in $C_j$ and $T_j$ are colored by $\alpha_j$,  terminal $d \in D_j$ can decode $K_j = a+\alpha_j b$.
This concludes our achievability proof.

\subsection{Claim~\ref{claim:2vc} used in the proof of Theorem~\ref{the:skd}}
\label{sec:app_sec}

\begin{claim}
\label{claim:2vc}
A vertex $u$ in $G$ is 2-vertex-connected from $s$ if and only if it is newly colored.
\end{claim}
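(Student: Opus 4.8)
The plan is to prove, by induction on the predetermined topological order, the stronger \emph{color-correspondence} property that underlies the whole construction: for every pair of already-colored vertices $u,w$, one has $c_u\neq c_w$ if and only if there exist directed paths $P(s,u)$ and $P(s,w)$ that are vertex-disjoint (sharing only $s$). Call this property $(\mathrm{P})$. Claim~\ref{claim:2vc} follows from $(\mathrm{P})$ by a short argument at the end, so the bulk of the work is the induction establishing $(\mathrm{P})$, and no external connectivity theorem is needed.

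For the inductive step I would take the next vertex $u$ in topological order (so $(\mathrm{P})$ already holds for every pair of strictly earlier vertices) and verify $(\mathrm{P})$ for every pair $(u,w)$ with $w$ earlier, splitting into the two ways $u$ can be colored. If $u$ is \emph{color-preserving}, all its in-neighbors share one color $c=c_u$; given an earlier $w$, when $c_w\neq c$ I take any in-neighbor $v$ (so $c_v=c\neq c_w$), invoke $(\mathrm{P})$ inductively to get vertex-disjoint paths to $v$ and $w$, and extend the $v$-path across the edge $(v,u)$ to reach $u$ (the extension stays disjoint from the $w$-path because $w$ precedes $u$, so $u$ lies on no path to $w$); when instead $c_w=c$, every path to $u$ has as prefix a path to some in-neighbor $v$ with $c_v=c=c_w$, and by the inductive $(\mathrm{P})$ every such prefix meets every path to $w$, so no disjoint pair exists. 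If $u$ is \emph{newly colored}, its color is fresh, so $c_u\neq c_w$ holds for all earlier $w$ and I need only exhibit disjoint paths: taking the two in-edges $(v_1,u),(v_2,u)$ with $c_{v_1}\neq c_{v_2}$ that witnessed the new color, I observe that $c_w$ cannot equal both $c_{v_1}$ and $c_{v_2}$, hence $c_w\neq c_{v_i}$ for some $i$; applying $(\mathrm{P})$ inductively to $(v_i,w)$ yields vertex-disjoint paths to $v_i$ and $w$, which I again extend across $(v_i,u)$ to $u$.

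With $(\mathrm{P})$ in hand, Claim~\ref{claim:2vc} is immediate in both directions. If $u$ is newly colored, its witnessing in-neighbors satisfy $c_{v_1}\neq c_{v_2}$, so $(\mathrm{P})$ gives vertex-disjoint paths to $v_1$ and $v_2$; appending the edges $(v_1,u)$ and $(v_2,u)$ produces two $s$-to-$u$ paths meeting only at $s$ and $u$, i.e.\ $u$ is $2$-vertex-connected from $s$. Conversely, if $u$ is $2$-vertex-connected, I take two internally vertex-disjoint $s$-to-$u$ paths and let $v_1,v_2$ be their penultimate vertices; internal disjointness forces $v_1\neq v_2$, the two prefixes are vertex-disjoint paths to $v_1$ and $v_2$, so $(\mathrm{P})$ gives $c_{v_1}\neq c_{v_2}$, whence $u$ has two in-edges of distinct color and is newly colored. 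The base cases are handled directly: the source has its own color, and each neighbor reachable only through edges from $s$ admits a unique $s$-path and is therefore correctly classified as color-preserving and not $2$-vertex-connected.

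I expect the main obstacle to be the \emph{newly-colored} case of the inductive step for $(\mathrm{P})$: there the color test forces $c_u\neq c_w$ for \emph{every} earlier $w$, so disjoint paths must be produced unconditionally, and the natural worry is that some earlier $w$ might be a bottleneck through which all paths to $u$ are forced to pass. The observation that dissolves this is that a single vertex $w$ cannot carry the color of both in-branches $v_1,v_2$ (as those colors differ), which always frees one branch to route to $u$ disjointly from $w$. Getting the bookkeeping of this branch selection right, together with the topological-order fact that $u$ never lies on a path to an earlier vertex, is the crux of the argument.
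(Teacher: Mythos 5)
Your proposal is correct, but it proves the claim by a genuinely different route than the paper. You establish the full pairwise invariant $(\mathrm{P})$ --- two already-colored vertices have distinct colors if and only if they admit paths from $s$ sharing only $s$ --- by induction on the topological order, and then obtain both directions of Claim~\ref{claim:2vc} by applying $(\mathrm{P})$ to penultimate vertices of $s$--$u$ paths; notably, $(\mathrm{P})$ is exactly the property the paper announces as the \emph{design goal} of the coloring but never actually proves. The paper instead argues each direction directly: for the forward direction it assumes a $2$-vertex-connected $u$ whose in-neighbors all carry one color $c_0$, lets $v^*$ be the $c_0$-colored vertex of least topological order, and shows every $s$--$u$ path must pass through $v^*$ (otherwise the earliest $c_0$-colored vertex on such a path would have a predecessor on the path colored $c_0$ of smaller order, contradicting minimality), so $v^*$ is a cut vertex; for the reverse direction it takes a single separating vertex $v$ and shows by induction that every vertex cut off by $v$, including $u$, is color preserving with color $c_v$. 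Your route buys two things: it is self-contained --- the paper's reverse direction silently uses the Menger-type implication that the absence of two internally disjoint paths yields a single separating vertex, which you never need since you only manipulate explicit paths --- and it delivers the stronger pairwise characterization rather than the single-vertex statement. The paper's proof is shorter and exhibits the obstruction (the cut vertex $v^*$) explicitly. Two small repairs to your write-up: (i) when the in-neighbor $v$ you select equals $w$ itself, the appeal to $(\mathrm{P})$ for the pair $(v,w)$ should be replaced by the trivial remark that every path through $w$ meets every path to $w$; and (ii) a vertex all of whose in-edges come from $s$ is color preserving by fiat, yet when parallel edges are present (the model allows them) it does have two internally disjoint $s$-paths, so your assertions that such a vertex ``admits a unique $s$-path'' and that ``internal disjointness forces $v_1\ne v_2$'' fail in that degenerate case --- a looseness the paper's own proof shares, and which must be settled by convention (the dissemination protocol handles such vertices separately in any event).
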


\begin{proof}
For the forward direction, assume in contradiction that there is a 2-vertex-connected vertex $u$ for which all incoming edges $(v,u)$ have $c_v=c_0$ for a given color $c_0$.
Let $v^*$ be the vertex in $G$ with least topological order such that $c_{v^*}=c_0$.
Notice, by our coloring procedure, that for every vertex $v' \ne v^*$ in the graph $G$, if $c_{v'}=c_{v^*}=c_0$ then it must be the case that all incoming edges $(w,v')$ to $v'$ satisfy $c_{w}=c_0$.
We now claim that removing $c_{v^*}$ disconnects $u$ from $s$, in contradiction with our assumption that $u$ is 2-vertex-connected from $s$.
Assume in contradiction that there is a path $P$ from $s$ to $u$ that does not pass through $v^*$.
Let $v'$ be the vertex on $P$ with minimum topological order for which $c_{v'}=c_0$.
As all incoming edges $(v,u)$ to $u$ satisfy $c_v=c_0$, the vertex $v'$ is well defined.
The vertex $v'$ cannot be $v^*$  by our assumption that $P$ does not pass through $v^*$.
The vertex $v'$ cannot be $s$ since no other vertex except $s$ has color 1.
Thus the incoming edge $(w,v')$ to $v'$ along the path $P$ satisfies $c_{w}=c_0$; this contradicts the minimality assumption on the topological order of $v'$.

For the reverse direction, assume $u$ is not 2-vertex-connected. This implies that there is a single vertex $v$ in $G$ whose removal will disconnect $s$ from $u$. Consider the cut partition $(V_s,V_u)$ of $V$ implied by the removal of $v$ where $s \in V_s$ and $u \in V_u$.
It now follows by induction on the topological order of $G$ that all vertices in $V_u$ (including $u$) are color preserving with color $c_v$.
\end{proof}

\subsection{Proof of Lemma~\ref{lem:plotkin} and Theorems~\ref{the:gap1} and~\ref{the:gap2}}
\label{sec:app_compare}

\subsubsection{Proof of Lemma~\ref{lem:plotkin}} For any two codewords $x$ and $x'$, let $\ell_{x,x'}= \left|\{i \in [n] \mid x_i =1\} \cup \{i \in [n] \mid x'_i=1\}\right|$.
Let $\ell=\min_{x,x'}\ell_{x,x'}$.
We would like to show that for codes of size $M$, it holds that $\ell$ is at most $n(2w-w^2)+\frac{nw(1-w)}{M-1}$.
Let $\ell^c_{x,x'} = n-\ell_{x,x'}=\left|\{i \in [n] \mid x_i =0\} \cap \{i \in [n] \mid x'_i=0\}\right|$ be the number of entries $i$ in which both $x_i$ and $x'_i$ equal 0; and let $\ell^c=n-\ell$.
 On one hand, 
 $$
 \sum_{x \neq x'}\ell^c_{x,x'} \leq {{M} \choose {2}}\ell^c
 $$
 On the other hand, if $M_i$ is the number of codewords $x$ for which $x_i=0$, then
 $$
 \sum_{x \neq x'}\ell^c_{x,x'} =  \sum_{i=1}^n{{M_i} \choose {2}}
 $$
Given the weight limitation of codewords, notice that $\sum_i{M_i} \geq M(1-w)n$.
Moreover,  under this constraint, the expression $\sum_{i}{{M_i} \choose {2}}$ is minimized when, for each $i$, $M_i=M(1-w)$.
We thus conclude that  
$$
{{M}\choose {2}}\ell^c \geq \sum_{x \neq x'}\ell^c_{x,x'} =  \sum_{i}{{M_i} \choose {2}} \geq  n{{M(1-w)} \choose {2}}
$$
Thus, $\ell^c \geq n(1-w)^2-\frac{nw(1-w)}{M-1}$, or, equivalently,  
$$
\ell \leq n-n(1-w)^2+\frac{nw(1-w)}{M-1} \leq nw(2-w)\cdot\left(1+\frac{1}{M-1}\right).
$$

\subsubsection{The non-secure case: proof of Theorem~\ref{the:gap1}}

We now prove Theorem~\ref{the:gap1} using the instance depicted in Figure~\ref{fig:gap1}. 

\begin{figure}[t]
\begin{center}
\includegraphics[width=2\columnwidth]{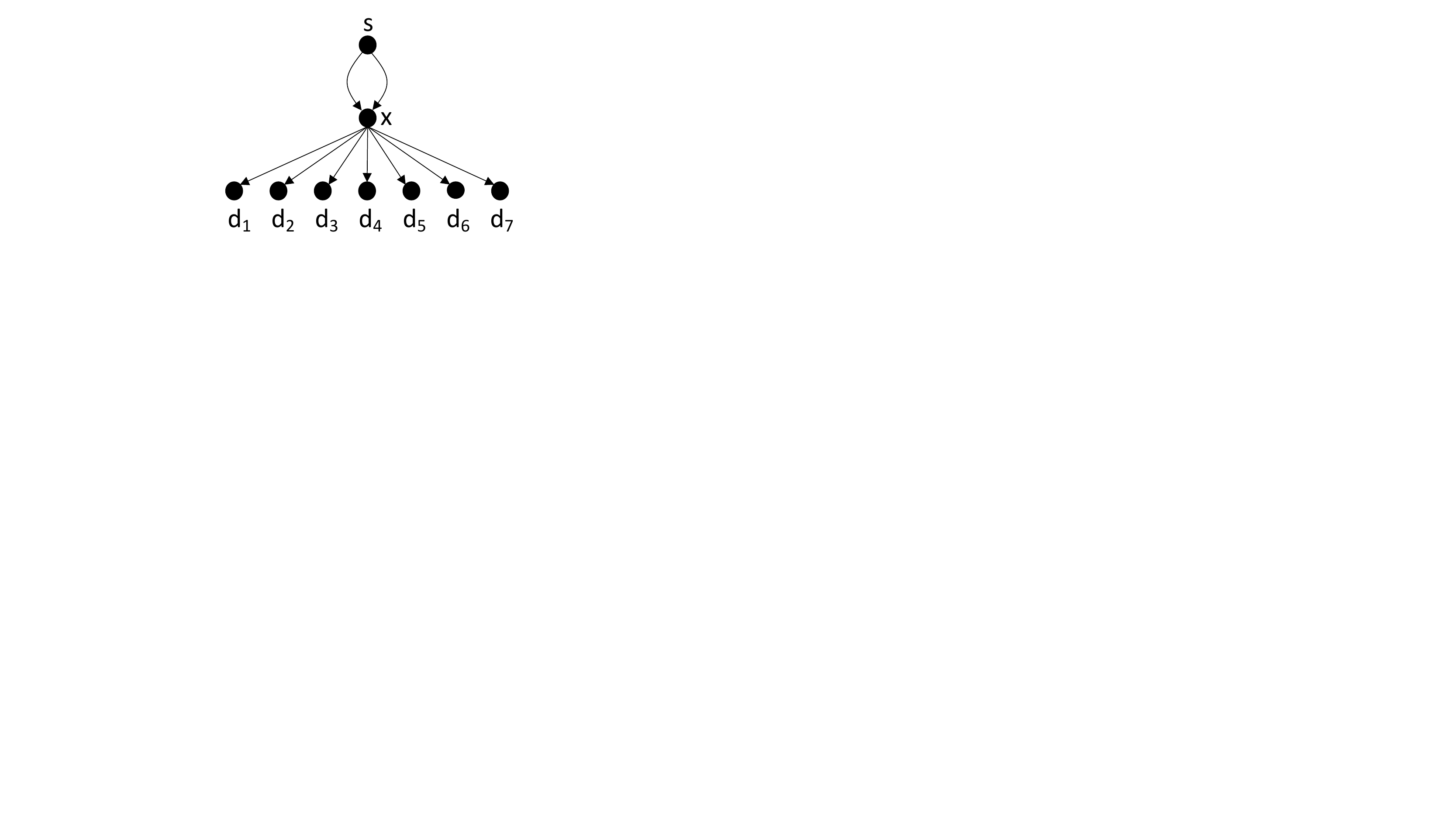}
\vspace{-70mm}
\caption{An example instance for Theorem~\ref{the:gap1}.}
\label{fig:gap1}
\end{center}
\end{figure}

\begin{proof} (of Theorem~\ref{the:gap1})
We consider the 3-layered instance depicted in Figure~\ref{fig:gap1}.
The network includes a source $s$ connected by 2 unit-capacity edges to an intermediate node $x$. The source $s$ and node $x$ represent the first two layers of the network. The third layer consists of terminal nodes $d_1,\dots,d_\ell$ for $\ell=1+1/\epsilon$, each connected with a unit capacity edge from $x$, and each belonging to a distinct terminal set $D_i=\{d_i\}$.
Note that this instance satisfies the conditions of Theorem~\ref{the:kd}.
Each terminal decodes a subset of the source information bits $M=\{b_1,b_2,\dots\}$; subset $M_i$ is decoded at terminal $d_i$ for $i \in [\ell]$.
Consider any blocklength-$n$  code over the network.
Let $M_x$ be the subset of source bits that are decodable at node $x$.
We first notice that $|M_x| \leq 2n$, and thus we assume, without loss of generality, that $M_x \subseteq \{b_1,\dots,b_{2n}\}$.
Given the graph topology, we also observe, for each $i \in [\ell]$,
that $M_{i}$ 
is a subset of $M_x$ with $|M_i| \leq n$.
Considering the characteristic binary vector $c_i$ of $M_i$ as a subset of $M_x$, we obtain a codebook $c_1,\dots,c_\ell$ of codewords each of blocklength (at most) $2n$ and of weight at most $n$.
Appending zeros to codewords if needed, we obtain a codebook $c_1,\dots,c_\ell$ of blocklength $2n$ and of weight at most $n$.
Applying Corollary~\ref{cor:plotkin} with $w=1/2$ and blocklength $2n$, we conclude that there exist indices $i \ne j$ such that the total support of $c_i$ and $c_j$, and that of $M_i$ and $M_j$, correspondingly, is at most $(2w-w^2+\epsilon)2n=(3+4\epsilon)n/2$.
Since $K_i$ is a function of $M_{i}$, $K_j$ is a function of $M_j$, and $K_i$ is independent of $K_j$, we conclude that $H(K_i)+H(K_j) \leq (3+4\epsilon)n/2$, which in turn implies that $\bRs(\cI) \leq (3+4\epsilon)/4=3/4 + \epsilon$.
\end{proof}

\begin{remark}
\label{rem:kd}
Given the connectivity conditions of Theorem~\ref{the:kd}, one can show, using random linear network coding over blocklength $n$, that every terminal node can decode two (uniformly distributed) messages, each of entropy $n/2$. Thus each terminal, using a potentially different linear combination of the decoded messages, can obtain a key of rate $1/2$ that is independent of any key decoded by a terminal in a different decoding set. This simple scheme implies that $\bRs(\cI) \geq 1/2$.
Thus, the gap presented in Theorem~\ref{the:gap1} between key dissemination schemes with and without source reconstruction, while not  necessarily optimal, is of the correct order. 
\end{remark}

\subsection{The secure case: proof of Theorem~\ref{the:gap2}}
We now show that the instance depicted in Figure~\ref{fig:skd}
(showing a tight example for Theorem~\ref{the:skd}) can be slightly modified to prove Theorem~\ref{the:gap2}.

\begin{proof} (of Theorem~\ref{the:gap2})
We start by defining the instance $\cI$, which is a modified version of that given in Figure~\ref{fig:skd}.
Let the number of terminal sets $D_i$ be $\ell = \frac{9}{\epsilon}\left(1+\frac{9}{\epsilon}\right)$ instead of $\ell=2$.
In $\cI$, for $i \in [\ell]$, the single source $s$ must disseminate key $K_i$ to terminal set $D_i=\{d_{i1},d_{i2}\}$ such that (i) for each node $v \in \{x,y,z\}$ and for any $i\in [\ell]$, $I(K_i;X_{{\tt In}(v)})=0$, and (ii) for any terminal node $d$, ${\rm In}(d)$ does not reveal any information about a key that is not required at $d$. There are two edges connecting $s$ and $x$ and two edges connecting $s$ and $y$.
All  edges in Figure~\ref{fig:skd} have unit capacity.
Note that each terminal has 2 vertex-disjoint paths from $s$, and all nodes are two-edge connected from $s$.
Below, we show that $\bRs(\cI) \leq 3/4 + \epsilon$.

For each $i \in [\ell]$, let $M_{i}$ be the source message bits reconstructed at terminal node $d_{i1}$.
Let $B=\cup_{i}M_{i}$.
For each $M_i$, we have $H(M_i|X_{{\tt In}(x)}X_{{\tt In}(z)})=0$.
Thus, it also holds that $H(B|X_{{\tt In}(x)}X_{{\tt In}(z)})=0$.
We conclude that, $H(B) \leq H(X_{{\tt In}(x)}X_{{\tt In}(z)}) \leq H(X_{{\tt In}(x)})+H(X_{{\tt In}(z)}|X_{{\tt In}(x)}) \leq 3$.

Considering the characteristic binary vector $c_i$ of $M_i$ as a subset of $B$, we obtain a codebook $c_1,\dots,c_\ell$ of codewords, each of blocklength $|B| \leq 3n$ and of weight at most $2n$.
Appending zeros to codewords if needed, we obtain a codebook $c_1,\dots,c_\ell$ of blocklength $3n$ and of weight at most $2n$.
By the pigeonhole principle, there exists a weight $w$ such that at least $1+\frac{9}{\epsilon}$ of the terminals $\{d_{i1}\}_i$ decode $M_i$ which is of size in the range $[3wn,3(w+\epsilon/9)n]$.
Applying Corollary~\ref{cor:plotkin} with such $w$ and blocklength $3n$, we conclude that  there exist indices $i \ne i'$ such that the total support of $c_i$ and $c_{i'}$, and that of $M_i$ and $M_{i'}$, are at most $(2w-w^2)3n+\epsilon n$.
As $K_i$ is a function of $M_{i}$, $K_{i'}$ is a function of $M_{i'}$, and $K_i$ is independent of $M_{i'}$ we conclude that 
\begin{align*}
H(K_i) & \leq |M_i \setminus M_{i'}| = |M_i \cup M_{i'}|-|M_{i'}| \\
& \leq (2w-w^2-w)3n+\epsilon n \\
& = (w-w^2)3n+\epsilon n \leq \frac{3n}{4}+\epsilon n.
\end{align*}
This implies that $\bRs(\cI) \leq 3/4 + \epsilon$.
\end{proof}

\begin{remark}
Given the connectivity conditions of Theorem~\ref{the:skd}, one can show using random linear network coding combined with additional ideas that  $\bRs(\cI) \geq 1/2$. Thus, as in Remark~\ref{rem:kd}, the gap presented in Theorem~\ref{the:gap2} between key dissemination schemes with and without source reconstruction, while also not necessarily optimal, is of the correct order. 
\end{remark}


\suppress{
\section{ITW text from here}

Throughout this work we study the connections between key-dissemination and secure-multicast. 
In many of the statements below, given an instance $\cI=(G,S,D,\cB)$ of the key-dissemination problem, we define a corresponding ``refined'' instance for secure-multicast $\cI_\rs=(G,(S_m,S_r),D,\cB)$ which is identical to $\cI$ except for the definition of $S_m$ and $S_r$ which are both set to equal $S$, i.e., in $\cI_\rs$, all source nodes in $S$ can generate both message bits and random bits used for masking. 
By our definitions in Section~\ref{sec:model}, it holds for $\cI$ and the corresponding $\cI_\rs$ that  $\bRs(\cI_\rs) \leq \bRk(\cI)$, as any code that is $(R,n)_\rs$-feasible on $\cI_\rs$ is also  $(R,n)_\rk$-feasible on ${\cI}$.
Our study is motivated by the potential benefit of $\bRk(\cI)$ over $\bRs(\cI_\rs)$.

\begin{theorem}[Single source case]
\label{thm:single}
Let $\cI = (G,S,D,\cB)$ be an instance of the key-dissemination problem with $|S|=1$, and let $\cI_\rs=(G,(S_m,S_r),D,\cB)$ be the corresponding instance of the secure multicast problem with $S_m=S_r=S$, then 
$$
\bRk(\cI)=\bRs(\cI_\rs)
$$
\end{theorem}


\begin{theorem}[Non-secure case]
\label{thm:Bzero}
Let $\cI = (G,S,D,\cB)$ be an instance of the key-dissemination problem with $\cB=\phi$, and let $\cI_\rs=(G,(S_m,S_r),D,\cB)$ be the corresponding instance of the secure multicast problem with $S_m=S_r=S$, then 
$$
\bRk^L(\cI)=\bRs^L(\cI_\rs).
$$
\end{theorem}

\begin{remark}
	The question of whether Theorem~\ref{thm:Bzero} holds for general (not necessarily linear) codes remains open.
	In other words, Question~\ref{q:mix} restricted to the non-secure setting, which asks if ``mixing helps,'' is unsolved.
	Equivalently, since $\bRs(\cI_\rs) = \bRs^L(\cI_\rs)$ in this case, it is unknown if there is an advantage to non-linear codes in key-dissemination when $\cB=\phi$.
%
\end{remark}


\begin{theorem}
\label{thm:hard_k_s}
Let $\cI_\rs=(G,(S_m,S_r),D,\cB)$ be a secure-multicast instance with $|S_m|=1$.
Let $R$ be a rate parameter.
One can efficiently construct an instance $\cI_\rk=(G_\rk,S_\rk,D_\rk,\cB_\rk)$  of the key dissemination problem such that 
$R \leq \bRs(\cI_\rs)$ if and only if $R \leq \bRk(\cI_\rk)$.
\end{theorem}

In \cite{huang2018}, it is shown that even for secure-multicast instances $\cI_\rs$ for which $S_m$ is of size 1, $D$ is of size 1, $\cB = \{\beta_e=\{e\} | e \in E\}$ consists of all single-edge subsets of $E$, all edges in $E$ are of unit capacity, and $S_r=V$, computing the secure-multicast capacity is as hard as resolving the capacity of multiple-unicast network coding instances. 
Corollary~\ref{cor:hard} follows from the instance $\cI_\rk$ obtained in the reduction from Theorem~\ref{thm:hard_k_s}.
\begin{corollary}[Key-dissemination is {\em hard}]
\label{cor:hard}
Determining the capacity of the key-dissemination problem is at least as difficult as determining the capacity of the multiple-unicast network coding problem. 
\end{corollary}

As discussed previously, to address Question~\ref{q:mix} in the general key-dissemination setting, we first define the {\em 2-stage decoding rate for key-dissemination}.
\vspace{2mm}


\noindent
{\bf 2-stage key-dissemination feasibility:} Instance $\cI$ to the key-dissemination problem is said to be $(R,n)_{\rk(2)}$-feasible if there exists a network code $({\mathcal F},\mathcal{G})$ with blocklength $n$ such that
\begin{itemize}
\item {\bf Key-rate:} $K$ is a uniform random variable with $H(K)=Rn$.
\item {\bf 2-Stage decoding:} There exists a collection $M$ of bits included in $(b_{ij}: s_i \in S)$
such that for all $d_j \in D$, $H(M|X^n_{{\rm In}(d_j)})=0$. Moreover, $K$  may be determined from $M$, i.e., $H(K|M) = 0$.
\item {\bf Secrecy:}  $I(K;(X^n_{e}:e \in \beta))=0$ for any subset $\beta \in \cB$.
\end{itemize}

The 2-stage key-dissemination capacity $\bRkk(\cI)$ of instance $\cI$ is defined analogously to the key-capacity $\bRk(\cI)$ of Definition~\ref{def:cap_k}.
Note that $\bRs(\cI_\rs) \le \bRkk(\cI) \le \bRk(\cI)$.

We are now ready to state our theorem comparing $\bRkk(\cI)$ with  $\bRk(\cI)$.

\begin{theorem}[General case, mixing helps]
	\label{thm:gap}
For any integer $\alpha>1$, there exist instances $\cI = (G,S,D,\cB)$ of the key-dissemination problem such that   
$$
\bRk(\cI) \ge \alpha\bRkk(\cI).
$$
\end{theorem}

%

\section{Proofs}

All proofs are given in detail in \cite{LE:22}. We here roughly outline the proof ideas.
In the single source case of Theorem~\ref{thm:single}, any uniform key $K$ obtained through key dissemination (potentially via mixing operations at the terminal nodes in the sense of Question~\ref{q:mix}) can be replaced by a collection of message bits, as required in secure-multicast, using an appropriate pre-encoding function at the single source. 
In the non-secure case of Theorem~\ref{thm:Bzero}, any uniform key $K$ obtained through (linear) key dissemination can be replaced by a collection of message bits across different sources through an iterative process in which, at each step, an identified bit $b_{ij}$ (held by some source $s_i$) that is independent of $K$ is deterministically set to 0. This process reduces the support of $K$ and can be shown to preserve key rate. One proceeds until $K$ can be represented as a collection of message bits as required in secure-multicast.
The reduction in Theorem~\ref{thm:hard_k_s} essentially uses an identical instance $\cI_\rk \simeq \cI_{\rs}$, with the requirement that in $\cI_\rk$ any shared key $K$ is a function of information generated at $S_m$ corresponding to message-bits in $\cI_\rs$. This is obtained by adding to $\cI_\rk$ an additional terminal that is only connected from $S_m$.
Finally, the proof of Theorem~\ref{thm:gap} is given below.

\subsection*{{\bf Proof of Theorem~\ref{thm:gap}:}
	For any integer $\alpha>1$, there exist instances $\cI = (G,S,D,\cB)$ of the key-dissemination problem such that   
$$
\bRk(\cI) \ge \alpha\bRkk(\cI)
$$
}

\proof
Let $\alpha >1$. Roughly speaking, the instance $\cI=(G,S,D,\cB)$ we present is reminiscent of the {\em combination network} \cite{ngai2004network}.  
The network $\cI$, depicted in a simplified form in Figure~\ref{fig:examples}.a for the special case of $\alpha = 2$, has the following structure.
$G$ is acyclic and has three layers of nodes. The first layer consists of the source nodes $S=\{s_1,\dots,s_r\}$. Here, we set $r$ to be equal to $\alpha+1$.
The second layer  consists of two sets of intermediate nodes $U=\{u_1,\dots,u_r\}$ and $\bar{U}=\{\bar{u}_1,\dots,\bar{u}_r\}$.
The final layer consists of terminal nodes $D=\{d_{i}\}_{i \in [r]}$.
The edge set of $G$ consists of the following edges, an edge $(s_i,u_i)$ for every $i \in [r]$,  an edge $(s_j,\bar{u}_i)$ for every $j \ne i$ in $[r]^2$, an edge $(u_i,d_i)$ and $(\bar{u}_i,d_i)$ for every $i \in [r]$.
Each edge has capacity 1.
For each node $v \in U \cup \bar{U}$, the set $\cB$ contains a subset $\beta_v=(e: e \in {\tt In}(v))$  comprising all incoming edges to $v$. 
Thus, $\cB=\{\beta_v : v \in U \cup \bar{U}\}$.

We first show that $\bRk(\cI) \leq 1$. 
Consider any network  code  for $\cI$  that is $(R,n)_\rk$-feasible. 
Let $K$ be the key shared by all terminal nodes.
For nodes $v \in U \cup \bar{U}$, let $e(v)$ be the (single) edge leaving $v$ and let $X_{e(v)}^n$ be the information transmitted on $e(v)$.
Since $\beta_v \in \cB$, it must hold that $I(K;X_{e(v)}^n) \leq I(K;(X_e^n : e \in \beta_v))=0$.
We now show that this implies that $R \leq 1$.
Let $i \in [r]$, and consider terminal $d_i$. 
The structure of $\cI$ implies that 
\begin{align*}
H(K) & = I(K;X_{e(u_i)}^n,X_{e(\bar{u}_i)}^n) \\
& = I(K;X_{e(u_i)}^n)+I(K;X_{e(\bar{u}_i)}^n|X_{e(u_i)}^n)) \\
& = I(K;X_{e(\bar{u}_i)}^n|X_{e(u_i)}^n)) \leq H(X_{e(\bar{u}_i)}^n) \leq n.
\end{align*}
To show that $\bRk(\cI)=1$, we present a  network  code for $\cI$  that is $(1,n)_\rk$-feasible (i.e., of rate $R=1$). 
Roughly speaking, our code communicates  the sum of all sources to each terminal $d_i$.
Formally, for $n=1$, source node $s_i$ sends a single bit $b_{i}$ on all its outgoing edges, and nodes $u_i$ and $\bar{u_i}$ send the binary sum of their incoming information on their single outgoing edge. Summing these, every terminal obtains the (shared) sum $\sum_{i=1}^r b_{i}$. Due to the nature of $K$, for any $\beta_v \in \cB$ it holds that $I(K;(X^n_{e}: e \in \beta_v))=0$. We conclude that $\cI$  is $(R,n)_\rk$-feasible for $R=1$.

\balance

We now show that $\bRkk(\cI) \leq \frac{1}{r-1}$.
Consider any network  code  for $\cI$  that is $(R,n)_{\rk(2)}$-feasible. 
Let the decoded messages from the first decoding stage be $M=(b_{ij}: (i,j) \in I)$ for some subset $I \subseteq [r] \times \mathbb{N}$ representing a collection of source bits and let $K$ be the key obtained by the second stage.
Recall that $H(K|M)=0$.
Let $M_i = (b_{ij}: (i,j) \in I)$ be the bits in $M$ generated at source $s_i \in S$, and let $R_i = |M_i|/n$. 
For any $i \in [r]$, removing a single edge from $\cI$ separates terminal $d_i$ from sources $(s_j: j \ne i)$.
Therefore, using standard cut-set bounds with respect to terminal $d_i$, it holds that
$\sum_{j \ne i}{R_i} \leq 1$.
By summing the above over $i$, we conclude that 
$\sum_i\sum_{j \ne i}{R_i} \leq r$, which in turn implies that 
$\sum_i{R_i} \leq \frac{r}{r-1}$.
Moreover, as $u_i \in U$ lies on the only path from $s_i$ to terminal $d_i$, $H(M_i|X^n_{{\tt In}(u_i)})=0$.
By our definition of $\cB$ we have for all $i \in [r]$ that $I(K;X^n_{{\tt In}(u_i)})=0$, which now implies that $I(K;M_i)=0$  for all $i \in [r]$.
Similarly, by our definition of $\cB$, it holds that $I(K;(M_j: j\ne i))=0$ for all $i \in [r]$ since $\bar{u}_i \in \bar{U}$ lies on the only path from $\{s_j\}_{j \ne i}$ to $d_i$.
Thus, for every $i \in [r]$,
\begin{align*}
H(K) & =I(K;M) \\
& =I(K;(M_j: j\ne i)) + I(K;M_i|(M_j: j\ne i)) \\
& =  I(K;M_i|(M_j: j\ne i)) \leq H(M_i) = R_in.
\end{align*}
Summing over all $i \in [r]$, we therefor conclude that $rH(K) \leq n\sum_i{R_i} \leq n \cdot \frac{r}{r-1}$, implying that 
$Rn=H(K) \leq \frac{n}{r-1}$.
We conclude that 
$$
1=\bRk(\cI) \geq (r-1)\bRkk(\cI) = \alpha \bRkk(\cI).
$$


\section{Conclusions}
\label{sec:conclude}

This work addresses the key-dissemination problem in the context of network coding, in which
a number of results comparing key capacity with the traditional secure-multicast capacity are presented. 
For single-source networks and linear non-secure networks, we show that there is no rate advantage in the flexible nature of the shared key $K$ in key-dissemination when compared to the requirement of secure-multicast that $K$ consist of source information bits. For general instances, we demonstrate rate advantages of key-dissemination when compared to secure-multicast or restricted forms of 2-stage key-dissemination decoding. 
Finally,  we show that determining the key capacity is as hard as determining the secure-multicast capacity which, in turn, is as hard as determining the multiple-unicast network coding capacity.

Several questions remain open or unstudied in this work. 
For the non-secure (multiple-source) setting, it is currently unresolved whether {\em mixing} 
(in the sense of Question~\ref{q:mix}) allows improved key rates compared to traditional multi-source multicast. This work does not address the multiple-multicast analog of key-dissemination in which different sets of terminals require independent secret keys, potentially mutually hidden between the different terminal sets.  Understanding the multiple-multicast analog of key-dissemination exhibits challenges even for the 2-multicast case and has strong connections to the cryptographic study of {\em secret sharing}. Finally, efficient communication schemes, especially designed for the multicast (or the multiple-multicast analog) of key-dissemination are not presented in this work. 
While one can design multicast key-dissemination schemes relying on random-linear network coding enhanced with certain security measures, a comprehensive study in this aspect is the subject of ongoing work.  
}

\newpage
\thispagestyle{empty}

{\tiny{
\bibliographystyle{unsrt}
\bibliography{ref}
}}

\end{document}